%%
%% This is file `sample-sigconf.tex',
%% generated with the docstrip utility.
%%
%% The original source files were:
%%
%% samples.dtx  (with options: `sigconf')
%% 
%% IMPORTANT NOTICE:
%% 
%% For the copyright see the source file.
%% 
%% Any modified versions of this file must be renamed
%% with new filenames distinct from sample-sigconf.tex.
%% 
%% For distribution of the original source see the terms
%% for copying and modification in the file samples.dtx.
%% 
%% This generated file may be distributed as long as the
%% original source files, as listed above, are part of the
%% same distribution. (The sources need not necessarily be
%% in the same archive or directory.)
%%
%% The first command in your LaTeX source must be the \documentclass command.
\pdfoutput=1
\documentclass[10pt,sigconf]{acmart}
\usepackage{booktabs}
\usepackage[noend]{algpseudocode}
\usepackage{amsmath}
\usepackage{algorithm}
\usepackage{mathtools}
\usepackage{textcomp}
\usepackage{csquotes}
\usepackage{varwidth}
\usepackage{epsfig}
\usepackage{multirow}
\usepackage{amssymb}  %  provides an extended symbol collection

% Abbrevs
\newcommand{\refsec}[1]{Section~\ref{#1}}

\newcommand{\reffig}[1]{Figure~\ref{#1}}

\newcommand{\qedwhite}{\hfill \ensuremath{\Box}}
\newcommand{\mysubsection}[1]{\vspace{0.9mm} \par\noindent{\bf #1}:}

\newcommand{\onecolfigure}[3]{
\begin{figure}[tbp]
	\centering
	\centerline{\epsfxsize=\columnwidth \epsffile{#1}}
	\vspace*{-1.0em}
	\centerline{\parbox{\columnwidth}{\caption{#2} \label{#3}}}
	%\vspace*{-1.7em}
\end{figure}}

\newcommand{\onecolfigurevcrunch}[3]{
\begin{figure}[tbp]
	\centering
	\centerline{\epsfxsize=\columnwidth \epsffile{#1}}
	\vspace*{-1.0em}
	\centerline{\parbox{\columnwidth}{\caption{#2} \label{#3}}}
	\vspace*{-0.7em}
\end{figure}}

%%
%% \BibTeX command to typeset BibTeX logo in the docs
\AtBeginDocument{%
  \providecommand\BibTeX{{%
    \normalfont B\kern-0.5em{\scshape i\kern-0.25em b}\kern-0.8em\TeX}}}

%% Rights management information.  This information is sent to you
%% when you complete the rights form.  These commands have SAMPLE
%% values in them; it is your responsibility as an author to replace
%% the commands and values with those provided to you when you
%% complete the rights form.

\setcopyright{acmcopyright}
\copyrightyear{2020}
\acmYear{2020}
\acmDOI{10.1145/1122445.1122456}
\acmConference[SIGMOD'20]{Proceedings of the 2020 ACM SIGMOD International Conference on Management of Data}{June 14--19, 2020}{Portland, OR, USA}

\settopmatter{printacmref=false}
\fancyhead{}
\setcopyright{none}
\usepackage{listings}
\renewcommand\footnotetextcopyrightpermission[1]{} % removes footnote
\pagestyle{empty}

\usepackage{color, colortbl}

\definecolor{dkgreen}{rgb}{0,0.6,0}
\definecolor{gray}{rgb}{0.5,0.5,0.5}
\definecolor{mauve}{rgb}{0.58,0,0.82}
\definecolor{darkblue}{rgb}{0.0, 0.0, 0.4}
\definecolor{cornflowerblue}{rgb}{0.39, 0.58, 0.93}
\definecolor{blizzardblue}{rgb}{0.67, 0.9, 0.93}
\definecolor{columbiablue}{rgb}{0.61, 0.87, 1.0}

\lstset{frame=tb,
  language=C++,
  aboveskip=3mm,
  belowskip=3mm,
  showstringspaces=false,
  columns=flexible,
  basicstyle={\small\ttfamily},
  numbers=none,
  numberstyle=\tiny\color{gray},
  keywordstyle=\color{blue},
  commentstyle=\color{dkgreen},
  stringstyle=\color{mauve},
  breaklines=true,
  breakatwhitespace=true,
  tabsize=3
}

%%
%% Submission ID.
%% Use this when submitting an article to a sponsored event. You'll
%% receive a unique submission ID from the organizers
%% of the event, and this ID should be used as the parameter to this command.
%%\acmSubmissionID{123-A56-BU3}

%%
%% The majority of ACM publications use numbered citations and
%% references.  The command \citestyle{authoryear} switches to the
%% "author year" style.
%%
%% If you are preparing content for an event
%% sponsored by ACM SIGGRAPH, you must use the "author year" style of
%% citations and references.
%% Uncommenting
%% the next command will enable that style.
%%\citestyle{acmauthoryear}

%%
%% end of the preamble, start of the body of the document source.
\begin{document}

%%
%% The "title" command has an optional parameter,
%% allowing the author to define a "short title" to be used in page headers.
\title[Aggify - Technical Report]{
  Optimizing Cursor Loops in Relational Databases}% \\
  %\large Technical Report}
  
\subtitle {\textbf{Technical Report}}
\titlenote{Extended version of the ACM SIGMOD 2020 paper \enquote{Aggify: Lifting the Curse of Cursor Loops using Custom Aggregates} \cite{Aggify}}

%\title[Aggify]{Optimizing Cursor Loops using Custom Aggregates}

%%
%% The "author" command and its associated commands are used to define
%% the authors and their affiliations.
%% Of note is the shared affiliation of the first two authors, and the
%% "authornote" and "authornotemark" commands
%% used to denote shared contribution to the research.
\author{Surabhi Gupta}
\affiliation{Microsoft Research India}
\email{t-sugu@microsoft.com}

\author{Sanket Purandare}
\authornote{Work done during an internship at Microsoft Research India.}
\affiliation{Harvard University}
\email{sanketpurandare@g.harvard.edu}

\author{Karthik Ramachandra}
\affiliation{Microsoft Research India}
\email{karam@microsoft.com}

%%
%% By default, the full list of authors will be used in the page
%% headers. Often, this list is too long, and will overlap
%% other information printed in the page headers. This command allows
%% the author to define a more concise list
%% of authors' names for this purpose.
%\renewcommand{\shortauthors}{Trovato and Tobin, et al.}

%%
%% The abstract is a short summary of the work to be presented in the
%% article.
\begin{abstract}
Loops that iterate over SQL query results are quite common, both in application programs that run outside the DBMS, as well as User Defined Functions (UDFs)  and stored procedures that run within the DBMS. It can be argued that set-oriented operations are more efficient and should be preferred over iteration; but from real world use cases, it is clear that loops over query results are inevitable in many situations, and are preferred by many users. Such loops, known as cursor loops, come with huge trade-offs and overheads w.r.t. performance, resource consumption and concurrency. 

We present Aggify, a technique for optimizing loops over query results that overcomes all these overheads. It achieves this by automatically generating custom aggregates that are equivalent in semantics to the loop. Thereby, Aggify completely eliminates the loop by rewriting the query to use this generated aggregate. This technique has several advantages such as: (i) pipelining of entire cursor loop operations instead of materialization, (ii) pushing down loop computation from the application layer into the DBMS, closer to the data, (iii) leveraging existing work on optimization of aggregate functions, resulting in efficient query plans. We describe the technique underlying Aggify, and present our experimental evaluation over benchmarks as well as real workloads that demonstrate the significant benefits of this technique.
\end{abstract}
%%
%% The code below is generated by the tool at http://dl.acm.org/ccs.cfm.
%% Please copy and paste the code instead of the example below.
%%
%\begin{CCSXML}
%<ccs2012>
% <concept>
%  <concept_id>10010520.10010553.10010562</concept_id>
%  <concept_desc>Computer systems organization~Embedded systems</concept_desc>
%  <concept_significance>500</concept_significance>
% </concept>
%</ccs2012>
%\end{CCSXML}
%
%\ccsdesc[500]{Computer systems organization~Embedded systems}
%\ccsdesc[300]{Computer systems organization~Redundancy}
%\ccsdesc{Computer systems organization~Robotics}
%\ccsdesc[100]{Networks~Network reliability}

%%
%% Keywords. The author(s) should pick words that accurately describe
%% the work being presented. Separate the keywords with commas.
%\keywords{datasets, neural networks, gaze detection, text tagging}

%% A "teaser" image appears between the author and affiliation
%% information and the body of the document, and typically spans the
%% page.
%\begin{teaserfigure}
%  \includegraphics[width=\textwidth]{sampleteaser}
%  \caption{Seattle Mariners at Spring Training, 2010.}
%  \Description{Enjoying the baseball game from the third-base
%  seats. Ichiro Suzuki preparing to bat.}
%  \label{fig:teaser}
%\end{teaserfigure}

%%
%% This command processes the author and affiliation and title
%% information and builds the first part of the formatted document.
\maketitle
\pagestyle{empty} 
\section{Introduction}
Since their inception, relational database management systems have emphasized the use of set-oriented operations over iterative, row-by-row operations. SQL strongly encourages the use of set operations and can evaluate such operations efficiently, whereas row-by-row operations are generally known to be inefficient.

However, implementing complex algorithms and business logic in SQL requires %a mindset or approach of thinking and 
decomposing the problem in terms of set-oriented operations. From an application developers' standpoint, this can be fairly hard in many situations. On the other hand, using simple row-by-row operations is often much more intuitive and easier for most developers. As a result, code that iterates over query results and performs operations for every row is extremely common in database applications, as we show in \refsec{subsec:eval-applicability}. 

In fact, the ANSI SQL standard has had the specialized CURSOR construct specifically to enable iteration over query results\footnote{CURSORs have been a part of ANSI SQL at least since SQL-92.} and almost all database vendors support CURSORs.
As a testimonial to the demand for cursors, we note that cursors have been added to procedural extensions of BigData query processing systems such as SparkSQL, Hive and other SQL-on-Hadoop systems~\cite{HPLSQL}. An online search for cursor usage on the popular project hosting service Github yields tens of millions of uses, which gives a sense of their wide usage. Cursors could either be in the form of SQL cursors that can be used in UDFs, stored procedures etc. as well as API such as JDBC that can be used in application programs~\cite{JDBCCURSOR, JDBC}.
%https://docs.microsoft.com/en-us/sql/relational-databases/cursors?view=sql-server-2017

While cursors can be quite useful for developers, they come with huge trade-offs. Primarily, as mentioned earlier, cursors process rows one-at-a-time, and as a result, affect performance severely. Depending upon the cardinality of query results on which they are defined, cursors might materialize results on disk, introducing additional IO and space requirements. Cursors not only suffer from speed problems, but can also acquire locks for a lot longer than necessary, thereby greatly decreasing concurrency and throughput~\cite{CURSE}.

This trade-off has been referred to by many as ``the curse of cursors'' and users are often advised by experts about the pitfalls of using cursors~\cite{CURSE, CURSE2, CURSE3, CURSE4}. A similar trade-off exists in  database-backed applications 
%that use ORMs such as Hibernate~\cite{}, 
where we find code that fetches data from a remote database by submitting a SQL query and iterates over these query results, performing row-by-row operations in application code. 
More generally, imperative programs are known to have serious performance problems when they are executed either in a DBMS or in database-backed applications. This area has been seeing more interest recently and there have been several works that have tried to address this problem~\cite{FroidVldb,Emani2016,Cheung13,cobra18, Park19, sqloop18}.

In this paper, we present Aggify, a technique for optimizing loops over query results. This loop could either be part of application code that runs on the client, or inside the database as part of UDFs or stored procedures.
For such loops, Aggify automatically generates a custom aggregate that is equivalent in semantics to the loop. Then, Aggify rewrites the cursor query to use this new custom aggregate, thereby completely eliminating the loop. 

This rewritten form offers the following benefits over the original program. It avoids materialization of the cursor query results and instead, the entire loop is now a single pipelined query execution. It can now leverage existing work on optimization of aggregate functions~\cite{COH06} and result in efficient query plans. In the context of loops in applications that run outside the DBMS, this can significantly reduce the amount of data transferred between the DBMS and the client. Further, the entire loop computation which ran on the client now runs inside the DBMS, closer to data. Finally, all these benefits are achieved without having to perform intrusive changes to user code. As a result, Aggify is a practically feasible approach with many benefits.

The idea that operations in a cursor loop can be captured as a custom aggregate function was initially proposed by Simhadri et. al.~\cite{uudf}. Aggify is based on this principle which applies to any cursor loop that does not modify database state. We formally prove the above result and also show how the limitations given in~\cite{uudf} can be overcome.
Aggify can seamlessly integrate with existing works on both optimization of database-backed applications~\cite{Emani2016, cobra18} and optimization of UDFs~\cite{FroidVldb, Grust2019}. We believe that Aggify pushes the state of the art in both these (closely related) areas and significantly broadens the scope of prior works. More details can be found in \refsec{sec:relwork}. 
Our key contributions in this paper are as follows.

\begin{enumerate}
	\item We describe Aggify, a language-agnostic technique to optimize loops that iterate over the results of a SQL query. These loops could be either present in applications that run outside the DBMS, or in UDFs/stored procedures that execute inside the DBMS. For such loops, Aggify generates a custom aggregate that is equivalent in semantics to the loop, thereby completely eliminating the loop by rewriting the query to use this generated aggregate.

\item We formally characterize the class of loops that can be optimized by Aggify. In particular, we show that Aggify is applicable to all cursor loops present in SQL User-Defined Functions (UDFs). We also prove that the output of Aggify is semantically equivalent to the input cursor loop.

\item We describe enhancements to the core Aggify technique that expand the scope of Aggify beyond cursor loops to handle iterative FOR loops, and    simplify the generated aggregate using a technique that we call \textit{acyclic code motion}. We also show how Aggify works in conjunction with existing techniques in this space.

\item Aggify has been prototyped on Microsoft SQL Server~\cite{SQLServer}. We discuss the design and implementation of Aggify, and present a detailed experimental evaluation that demonstrates performance gains, resource savings and huge reduction in data movement.
\end{enumerate}

The rest of the paper is organized as follows. We start by motivating the problem in \refsec{sec:motivate} and then provide the necessary background in \refsec{sec:background}. \refsec{sec:aggify} provides an overview of Aggify and presents the formal characterization. \refsec{sec:trans} and \refsec{sec:rewrite} describe the core technique, and \refsec{sec:preserve} reasons about the correctness of our technique. \refsec{sec:enhance} describes enhancements and \refsec{sec:impl} describes the design and implementation. \refsec{sec:eval} presents our experimental evaluation, \refsec{sec:relwork} discusses related work, and \refsec{sec:concl} concludes the paper.

%https://stackoverflow.com/questions/58141/why-is-it-considered-bad-practice-to-use-cursors-in-sql-server
\section{Motivation} \label{sec:motivate}
We now provide two motivating examples and then briefly describe how cursor loops are typically evaluated. 

\subsection{Example: Cursor Loop within a UDF}
\onecolfigure
{figs/udf-example}
{Query invoking a UDF that has a cursor loop.}
{fig:udf-example}

Consider a query on the TPC-H schema that is based on query 2 of the TPC-H benchmark, but with a slight modification.
For each part in the PARTS table, this query lists the part identifier (\textit{p\_partkey}) and the name of the supplier that supplies that part with the minimum cost. To this query, we introduce an additional requirement that the user should be able to set a lower bound on the supply cost if required. This lower bound is optional, and if unspecified, should default to a pre-specified value.

Typically, TPC-H query 2 is implemented using a nested subquery. However, another way to implement this is by means of a simple UDF that, given a \textit{p\_partkey}, returns the name of the supplier that supplies that part with the minimum cost. Such a query and UDF (expressed in the T-SQL dialect~\cite{TSQL}) is given in \reffig{fig:udf-example}. As described in~\cite{FroidVldb}, there are several benefits to implement this as a UDF such as reusability, modularity and readability, which is why developers who are not SQL experts often prefer this implementation.

The UDF \textit{minCostSupp} creates a cursor (in line 6) over a query that performs a join between PARTSUPP and SUPPLIER based on the \textit{p\_partkey} attribute. 
Then, it iterates over the query results while 
computing the current minimum cost (while ensuring that it is above the lower bound), and maintaining the name of the supplier who supplies this part at the current minimum (lines 8-12). At the end of the loop, the \textit{@suppName} variable will hold the name of the minimum cost supplier subject to the lower bound constraint, which is then returned from the UDF. Note that for brevity, we have omitted the OPEN, CLOSE and DEALLOCATE statements for the cursor in \reffig{fig:udf-example}; the complete definition of the loop is available in \cite{AGGIFYWL}.

This loop is essentially computing a function that can be likened to \textit{argmin}, which is not a built-in aggregate. This example illustrates the fact that cursor loops can contain arbitrary operations which may not always be expressible using built-in aggregates. 
For the specific cases of functions such as \textit{argmin}, there are advanced SQL techniques that could be used\cite{Emani2016}; however, a cursor loop is the preferred choice for developers who are not SQL experts.
% we can obtain an equivalent SQL query using any of several techniques such as sub-query, a combination of ORDER BY and LIMIT, or using a construct like RANK() if the SQL dialect supports it. However, these are advanced SQL techniques which might be beyond the realm of developers who are not SQL experts. 

\subsection{Example: Cursor Loop in a database-backed Application} \label{java-motivation}
\onecolfigure
{figs/jdbc-example}
{Java method computing cumlative ROI for investments using JDBC for database access.}
{fig:jdbc-example}

Consider the scenario of an application that manages investment portfolios for users. \reffig{fig:jdbc-example} shows a Java method from a database-backed application that uses JDBC API~\cite{JDBC} to access a remote database. 
The table \textit{monthly\_investments} includes, among other details, the rate of return on investment (ROI) on a monthly basis. The program first issues a query to retrieve all the monthly ROI values for a particular investor starting from a specified date. Then, it iterates over these monthly ROI values and computes the cumulative rate of return on investment using the time-weighted method\footnote{When the rate of return is calculated over a series of sub-periods of time, the return in each sub-period is based on the investment value at the beginning of the sub-period. 
Assuming returns are reinvested, if the rates over 
n successive time sub-periods are 
$r_{1},r_{2},r_{3},\dots ,r_{n}$, then the cumulative return rate using the time-weighted method is given by~\cite{}: 
$(1+r_{1})(1+r_{2})\dots (1+r_{n})-1$.} and returns the cumulative ROI value.
Observe that this operation is also not expressible using built-in aggregates.
%https://en.wikipedia.org/wiki/Rate_of_return#Returns_over_multiple_periods

% r is actually rate of return

\subsection{Cursor loop Evaluation} \label{subsec:eval}
A cursor is primarily a control structure that enables traversal over the records in the result of a query. They are similar to the programming language concept of iterators. Database systems support different types of cursors such as implicit, explicit, static, dynamic, scrollable, forward-only etc. Our work currently focuses on static, explicit cursors, which is arguably the most widely used type of cursors. 

Cursor loop implementations may vary across database systems, and may also vary based on the type of cursor and other options specified. However, the default behavior in almost all RDBMSs is typically as follows. As part of the evaluation of the cursor declaration (the DECLARE CURSOR statement), the database engine executes the cursor query and materializes the results into a temporary table. This is typically followed by the OPEN CURSOR statement (omitted in \reffig{fig:udf-example} for brevity), which initializes the cursor.
The FETCH NEXT statement moves the cursor and assigns values from the current tuple into local variables. The global variable FETCH\_STATUS indicates whether there are more tuples remaining in the cursor. The body of the WHILE loop is interpreted statement-by-statement, until FETCH\_STATUS indicates that the end of the result set has been reached. Subsequent to this, the cursor is typically closed and deallocated using the CLOSE cursor and DEALLOCATE cursor statements (omitted in \reffig{fig:udf-example} for brevity) that deletes any temporary work tables created by the cursor.

From the above, it is clear that cursor loops can lead to performance issues due to the materialization of the results of the cursor query onto disk, which incurs additional IO and the interpreted evaluation of the loop. This is exacerbated in the presence of large datasets and more so, when invoked repeatedly as in \reffig{fig:udf-example}. The UDF in \reffig{fig:udf-example} is invoked once per part, which means that the cursor query is run multiple times, and temp tables are created and dropped for every run! This is the reason cursors have been referred to as a ``curse'' w.r.t performance~\cite{CURSE}.
\section{Background} \label{sec:background}
We now cover some background material that we make use of in the rest of the paper. %First, we briefly describe custom aggregate functions and their evaluation in DBMSs, and then describe the Data Flow Analysis techniques that are necessary for our work.

\subsection{Custom Aggregate Functions} \label{subsec:cagg}
An aggregate is a function that accepts a collection of values as input and computes a single value by combining the inputs.
Some common operations like \textit{min}, \textit{max}, \textit{sum}, \textit{avg} and \textit{count} are provided by DBMSs as built-in aggregate functions. These are often used along with the GROUP BY operator that supplies a grouped set of values as input.
Aggregates can be deterministic or non-deterministic. Deterministic aggregates return the same output when called with the same input set, irrespective of the order of the inputs. 
All the above-mentioned built-in aggregates are deterministic. Oracle's LISTAGG() is an example of a non-deterministic built-in aggregate function~\cite{LISTAGG}. 

In addition to built-in aggregates, DBMSs allow users to define custom aggregates (also known as User-Defined Aggregates) to implement custom logic. Once defined, they can be used exactly like built-in aggregate functions.
These custom aggregates need to adhere to an aggregation contract~\cite{AGGContract}, typically comprising four methods: \textit{init}, \textit{accumulate}, \textit{terminate} and \textit{merge}. The names of these methods may vary across DBMSs. We now briefly describe this contract.

\begin{enumerate}
\item \textit{Init}: This method is used to initialize variables (fields) that maintain the internal state of the aggregate. It is invoked once for each group aggregated. 

\item \textit{Accumulate}: Defines the main aggregation logic. It is called once for each qualifying tuple in the group being aggregated. It updates the internal state of the aggregate to reflect the effect of the incoming tuple.

\item \textit{Terminate}: Returns the final aggregated value. It might optionally perform some computation as well.

\item \textit{Merge}: This method is optional; it is used in parallel execution of the aggregate to combine partially computed results from different invocations of \textit{Accumulate}.
\end{enumerate}

Out of these 4 methods, the \textit{Merge} method is optional, since it is only necessary in the context of intra-query parallelism. If the query invoking the aggregate function does not use parallelism, the \textit{Merge} method is never invoked. The other 3 methods are mandatory as they are necessary for achieving the aggregation behavior.
The aggregation contract does not enforce any constraint on the order of the input. If order is required, it has to be enforced outside of this contract~\cite{LISTAGG}.

% To illustrate the way custom aggregates are defined in databases, consider the following example of the \textit{prod} user defined aggregate in SQL server which computes the product of the arguments passed. This example is reproduced from~\cite{COH06}.

% \begin{verbatim}
% public class Prod {
%     double val;
    
%     public void Init() {
%         val = 1;
%     }
%     public void Accumulate(double fetchedVal) {
%         val = val * fetchedVal;
%     }
%     public double Terminate() {
%         return val;
%     }
%     public void Merge(Prod other) {
%         val = val * other.val;
%     }
% }
% \end{verbatim}

Several optimizations on aggregate functions have been explored in previous literature~\cite{COH06}. These involve moving the aggregate around joins and allowing them to be either evaluated eagerly or be delayed depending on cost based decisions ~\cite{Yan95eageraggregation}. Duplicate insensitivity and null invariance can also be exploited to optimize aggregates~\cite{Gal01}. There are also well-known techniques that exploit parallelism by partitioning the data, performing local aggregation on each partition and merging the results using global aggregation.

%However, It is important to note that for correctness, only deterministic aggregates should have parallel execution enabled. Hence, the the onus is on the programmer defining the custom aggregate to implement the merge function correctly or decide to disable parallelism based on the operations being done inside the aggregate. 

\subsection{Data Flow Analysis}

\onecolfigurevcrunch
{figs/cfg_final}
{Control Flow Graph for the UDF in \reffig{fig:udf-example}, augmented with data dependence edges.}
{fig:CFG_ex1}

We now briefly describe the data structures and static analysis techniques that we make use of in this paper. The material in this section is mainly derived from~\cite{KENNEDYBOOK,Aho06, DFABOOK} and we refer the readers to these for further details. 

Data flow analysis is a program analysis technique that is used to derive information about the run time behaviour of a program~\cite{KENNEDYBOOK,Aho06,DFABOOK}. 
{\color{black} The \textit{Control Flow Graph (CFG)} of a program is a directed graph where vertices represent basic blocks (a straight line code sequence with no branches) and edges represent transfer of control between basic blocks during execution.
The \textit{Data Dependence Graph} (DDG) of a program is a directed multi-graph in which program statements are nodes, and the edges represent data dependencies between  statements. Data dependencies could be of different kinds -- Flow dependency (read after write), Anti-dependency (write after read), and Output dependency (write after write).  The entry and exit point of any node in the CFG is denoted as a \textit{program point}.

\reffig{fig:CFG_ex1} shows the CFG for the UDF in \reffig{fig:udf-example}. Here we consider each statement to be a separate basic block. The CFG has been augmented with data dependence edges where the dotted (blue) and dashed (red) arrows respectively indicate flow and anti dependencies. We use this augmented CFG (sometimes referred to as the Program Dependence Graph or PDG~\cite{PDG}) as the input to our technique.}

\subsubsection{Framework for data flow analysis}
A data-flow value for a program point is an abstraction of the set of all possible program states that can be observed for that point. For a given program entity $e$, such as a variable or an expression, data flow analysis of a program involves (i) discovering the effect of individual program statements on $e$ (called local data flow analysis), and (ii) relating these effects across statements in the program (called global data flow analysis) by propagating data flow information from one node to another. 
The relationship between local and global data flow information is captured by a system of data flow equations. The nodes of the CFG are traversed and these equations are iteratively solved until the system reaches a fixpoint. The results of the analysis can then be used to infer information about the program entity $e$.

\subsubsection{UD and DU Chains} \label{subsubsec:duchain}
When a variable $v$ is on the LHS of an assignment statement $S$, $S$ is known as a \textit{Definition} of $v$. When a variable $v$ is on the RHS of an assignment statement $S$, $S$ is known as a \textit{Use} of $v$.  
A Use-Definition (UD) Chain is a data structure that consists of a use U, of a variable, and all the definitions D, of that variable that can reach that use without any other intervening definitions. A counterpart of a UD Chain is a Definition-Use (DU) Chain which consists of a definition D, of a variable and all the uses U, reachable from that definition without any other intervening definitions. These data structures are created using data flow analysis. 

\subsubsection{Reaching definitions analysis} \label{subsubsec:reach}
This analysis is used to determine which definitions reach a particular point in the code. A definition \textit{D} of a variable reaches a program point \textit{p} if there exists a path leading from \textit{D} to \textit{p} such that \textit{D} is not overwritten (killed) along the path. The output of this analysis can be used to construct the UD and DU chains which are then used in our transformations.
For example, in \reffig{fig:udf-example}, consider the use of the variable \textit{@lb} inside the loop (line 9). There are at least two definitions of \textit{@lb} that reach this use. One is the the initial assignment of \textit{@lb} to -1 as a default argument, and the other is assignment on line 5.

% The set of reaching definitions going into basic block \textit{b} (IN[b])
% is the union of all the reaching definitions coming out from 
% \textit{b}'s predecessor blocks in the control flow graph. The out-state (OUT[b]) of a basic block \textit{b} consists of all the reaching definitions coming out of its predecessor blocks minus those reaching definitions whose variable is killed by \textit{b} (overwritten)
%  plus any new definitions generated within \textit{b}. 

\subsubsection{Live variable analysis} 
This analysis is used to determine which variables are \textit{live} at each program point. A variable is said to be \textit{live} at a point if it has a subsequent use before a re-definition.
For example, consider the variable \textit{@lb} in \reffig{fig:udf-example}. This variable is \textit{live} at every program point in the loop body. But at the end of the loop, it is no longer \textit{live} as it is never used beyond that point. In the function \textit{minCostSupp}, the only variable that is \textit{live} at the end of the loop is \textit{@suppName}. We will use this information in Aggify as we shall show in \refsec{sec:trans}.

% Here, the IN state of a block is the set of variables that are live at the start of the block and OUT state is the set of variables which are live at the end of the block. The OUT state of a block \textit{b} is computed by taking union over the IN states of it's successor basic blocks. The IN state of a block is \textit{b} is computed by taking the variables that are live at the beginning of each successor block of \textit{b}, minus the variables which are written to in \textit{b} plus the set of variables that are used in this block before any assignment. This process is done iteratively until the IN and OUT states of blocks stop changing.  

\section{Aggify Overview} \label{sec:aggify}
Aggify is a technique that offers a solution to the limitations of cursor loops described in \refsec{subsec:eval}. It achieves this goal by replacing the entire cursor loop with an equivalent SQL query that invokes a custom aggregate that is systematically constructed.
Performing such a rewrite that guarantees equivalence in semantics is nontrivial.
The key challenges involved here are the following. The body of the cursor loop could be arbitrarily complex, with cyclic data dependencies and complex control flow. The query on which the cursor is defined could also be arbitrarily complex, having subqueries, aggregates and so on. Furthermore, the UDF or stored procedure that contains this loop might define variables that are used and modified within the loop. 

In the subsequent sections, we show how Aggify achieves this goal such that the rewritten query is semantically equivalent to the cursor loop. Aggify primarily involves two phases.
The first phase is to construct a custom aggregate by analyzing the loop (described in \refsec{sec:trans}). Then, the next step is to rewrite the cursor query to make use of the custom aggregate and removing the entire loop (described in \refsec{sec:rewrite}). 

\subsection{Applicability} \label{subsec:applicability}
Before delving into the technique, we formally characterize the class of cursor loops that can be transformed by Aggify and specify the supported operations inside such loops.

\begin{definition}
A \textit{Cursor Loop} (CL) is defined as a tuple ($Q, \Delta$) where $Q$ is any SQL SELECT query and $\Delta$ is a program fragment that can be evaluated over the results of $Q$, one row at a time.
\end{definition}

Observe that in the above definition, the body of the loop ($\Delta$) is neither specific to a programming language nor to the execution environment. The loop can be either implemented using procedural extensions of SQL, or using other general purpose programming languages such as Java. This definition therefore encompasses the loops shown in Figures \ref{fig:udf-example} and \ref{fig:jdbc-example}. 
In general, the statements in the loop can include arbitrary operations that may even mutate the persistent state of the database. However, such loops cannot be transformed by Aggify, since aggregates by definition cannot modify database state. We now state the theorem that defines the applicability of Aggify.

\begin{theorem} \label{th:app}
Any cursor loop CL($Q, \Delta$) that does not modify the persistent state of the database can be equivalently expressed as a query $Q'$ that invokes a custom aggregate function $Agg_{\Delta}$.
\end{theorem}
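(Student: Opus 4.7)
The plan is to give a constructive proof: exhibit an explicit recipe for building $Agg_{\Delta}$ from $\Delta$ and then argue, by induction on the number of rows processed, that the rewritten query $Q'$ produces the same output as the original loop. Since the aggregation contract described in Section~3.1 has exactly three mandatory hooks (\emph{init}, \emph{accumulate}, \emph{terminate}), the skeleton of the construction is forced by that contract: the work that happens before the loop starts goes into \emph{init}, the per-row body $\Delta$ goes into \emph{accumulate}, and whatever is live at the loop exit becomes the responsibility of \emph{terminate}.

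The first step is to decide what the internal state of the aggregate should be. I would run the data-flow analyses recalled in Section~3.2 on the enclosing program: use reaching-definitions and UD/DU chains to determine, for each variable referenced inside $\Delta$, whether its value is carried across iterations or is freshly rebound per row, and use live-variable analysis at the loop exit to identify the variables whose final values matter. Every variable in the union of these two sets becomes a private field of $Agg_{\Delta}$. Then \emph{init} assigns to each such field the value given by the definition(s) that reach the loop entry; \emph{accumulate} is obtained from $\Delta$ by rebinding the cursor-fetched variables to the aggregate's row-parameters and by replacing reads/writes of the carried variables by reads/writes of the corresponding fields; and \emph{terminate} returns (or exposes, via output fields) precisely the variables that live-variable analysis marks as live after the loop. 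The rewriting of $Q$ into $Q'$ is then just $\mathtt{SELECT}\ Agg_{\Delta}(c_1,\dots,c_k)\ \mathtt{FROM}\ (Q)$, with any ORDER BY of the cursor lifted into $Q$ itself so that rows are presented to \emph{accumulate} in the same order the cursor would have fetched them.

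For the equivalence argument I would proceed by induction on $i$, the index of the row being processed. The invariant is that after the $i$-th invocation of \emph{accumulate}, the values stored in the aggregate's fields coincide with the values of the corresponding program variables at the top of the $(i{+}1)$-st iteration of the loop. The base case ($i=0$) is discharged by the construction of \emph{init} from the reaching definitions at the loop entry. The inductive step reduces to showing that \emph{accumulate} applied to row $i{+}1$ behaves the same as one iteration of $\Delta$ on that row, which is immediate because \emph{accumulate} is $\Delta$ with a purely syntactic renaming of carried variables to fields and of fetched variables to parameters. The hypothesis that $\Delta$ does not modify persistent database state is used exactly here: it guarantees that no side-effect of $\Delta$ escapes the per-row transformation, so the renaming faithfully captures the entire observable effect of one iteration. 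Once the loop terminates after the last row, \emph{terminate} exposes the final values of the live variables, matching the state the original program would see just after the loop.

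The main obstacle I expect is not the straight-line case but the bookkeeping around nontrivial control flow inside $\Delta$: early exits (such as \texttt{BREAK}/\texttt{RETURN} from the loop body), nested branches with asymmetric updates to carried variables, and variables that are both defined \emph{before} the loop and conditionally overwritten \emph{inside} it. All of these affect which variables are genuinely part of the carried state and, for early exits, force \emph{accumulate} to carry a termination flag that makes subsequent invocations no-ops, since the aggregate contract does not itself offer a ``stop consuming rows'' primitive. Handling these cleanly is where the UD/DU-chain-based state selection earns its keep, and it is also what justifies the later enhancements (e.g., the acyclic code motion of Section~7) that the paper promises to layer on top of this core theorem.
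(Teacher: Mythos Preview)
Your overall strategy---build the aggregate from the three-hook contract, then prove equivalence by induction on the row index---is the paper's approach (the proof in the paper is just a forward reference to Sections~5--7, and the argument in Section~7 is your inductive invariant stated less formally). One concrete step, however, does not go through as written. You place the initialization of carried-state fields in \emph{init}, assigning each field ``the value given by the definition(s) that reach the loop entry.'' But the aggregation contract in Section~3.1 gives \emph{init} no parameters: the engine calls it with an empty argument list, once per group. If a carried variable's reaching definition is a runtime value---a UDF parameter, or the result of a conditional assignment upstream of the loop, as with \texttt{@lb} in Figure~1, which has two reaching definitions---there is no mechanism to hand that value to \emph{init}. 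The paper's fix (Section~5.2) is to leave \emph{init} nearly empty (it only sets a boolean \emph{isInitialized} flag), pass the initial values as additional parameters to \emph{accumulate}, and guard a one-time initialization block inside \emph{accumulate} on that flag. Without this deferral, your construction works only when the carried state is initialized to compile-time constants, which falls short of the ``any cursor loop'' claimed in the theorem.

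A minor secondary point: the termination-flag machinery you propose for \texttt{BREAK}/\texttt{RETURN} is unnecessary here, since the paper's supported language model (Section~4.2) excludes unconditional jumps from $\Delta$ outright and relegates them to a preprocessing rewrite rather than handling them inside the aggregate.
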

\begin{proof}
We prove this theorem in three steps. 
\begin{enumerate}
    \item We describe (in \refsec{sec:trans}) a technique to systematically construct a custom aggregate function $Agg_{\Delta}$ for a given cursor loop CL($Q, \Delta$).
    \item We present (in \refsec{sec:rewrite}) the  rewrite rule that can be used to rewrite the cursor loop as a query $Q'$ that invokes $Agg_{\Delta}$
    \item We show (in \refsec{sec:preserve}) that the rewritten query $Q'$ is semantically equivalent to the cursor loop CL($Q, \Delta$).
\end{enumerate}
By steps (1), (2), and (3), the theorem follows.
\end{proof}

Observe that Theorem~\ref{th:app} encompasses a fairly large class of loops encountered in reality. More specifically, this covers all cursor loops present in user-defined functions (UDFs). This is because UDFs by definition are not allowed to modify the persistent state of the database. As a result, all cursor loops inside such UDFs can be rewritten using Aggify. Note that this theorem only states that a rewrite is possible; it does not necessarily imply that such a rewrite will always be more efficient. There are several factors that influence the performance improvements due to this rewrite, and we discuss them in our experimental evaluation (\refsec{sec:eval}).

\subsection{Supported operations} \label{subsec:support}
We support all operations inside a loop body that are admissible inside a custom aggregate. The exact set of operations supported inside a custom aggregate varies across DBMSs, but in general, this is a very broad set which includes standard procedural constructs such as variable declarations, assignments, conditional branching, nested loops (cursor and non-cursor) and function invocations. All scalar and table/collection data types are supported. The formal language model that we support is given below.
{\small
\begin{align*}
    expr ::=& \text{ \textit{Constant} } | \text{ \textit{var} } | \text{ \textbf{Func(...)} } | \text{ \textbf{Query(...)} } \\
                \vspace{-5mm}
            &| \text{ $\neg$ expr } | \text{ \textit{expr1} \textbf{op} \textit{expr2} }\\
            \vspace{-5mm}
    op  ::=& \text{ + $|$ - $|$ * $|$ / $|$ \textless $|$ \textgreater $|$ ... }\\
    Stmt ::=& \text{ \textit{skip} } | \text{ \textit{Stmt; Stmt} } | \text{ \textit{var} := \textit{expr} } \\
            &| \text{ \textbf{if} \textit{expr} \textbf{then} \textit{Stmt} \textbf{else} \textit{Stmt}} \\
            &| \text{ \textbf{while} \textit{expr} \textbf{do} \textit{Stmt}} \\
            &| \text{ \textbf{try} \textit{Stmt} \textbf{catch} \textit{Stmt}} \\
    Program ::=& \text{ \textit{Stmt} }
\end{align*}}
Nested cursor loops are supported as described in \refsec{subsubsec:nested}. SQL SELECT queries inside the loop are fully supported. DML operations (INSERT, UPDATE, DELETE) on local table variables or temporary tables or collections are supported.  We can also support operations having side-effects (such as writing to a file) if the DBMS allows these operations inside a custom aggregate. Exception handling code (TRY...CATCH) can also be supported. Operations that may change the persistent state of the database (DML statements against persistent tables, transactions, configuration changes etc.) are not supported. Unconditional jumps such as BREAK and CONTINUE are not directly supported\footnote{This is not a fundamental limitation, as such loops can be preprocessed and rewritten without using unconditional jumps.}.
We now describe the core Aggify technique in detail.
% In the following sections, we describe the technique at the core of Aggify. We first describe the basic technique, and then present some enhancements in \refsec{sec:enhance}.

\section{Custom Aggregate Construction} \label{sec:trans}
\onecolfigure
{figs/agg-template}
{Template for the custom aggregate.}
{fig:agg-template}

\onecolfigure
{figs/udf-agg}
{Custom aggregate for the loop in \reffig{fig:udf-example} constructed by Aggify.}
{fig:udf-agg}

\onecolfigure
{figs/java-agg}
{Custom aggregate for the loop in \reffig{fig:jdbc-example} constructed by Aggify.}
{fig:java-agg}

Given a cursor loop (Q, $\Delta$) our goal is to construct a custom aggregate that is equivalent to the body of the loop, $\Delta$.
As explained in \refsec{subsec:cagg}, we use the aggregate function contract involving the 3 mandatory methods -- Init, Accumulate and Terminate -- as the target of our construction. 
%Write about merge and that it is part of future work.
Constructing such a custom aggregate involves specifying its signature (return type and parameters), fields and constructing the three method definitions. 
\reffig{fig:agg-template} shows the template that we start with. The patterns $<< >>$ in \reffig{fig:agg-template} (shown in green) indicate `holes' that need to be filled with code fragments inferred from the loop.
We now show how to construct such an aggregate and illustrate it using the examples from \refsec{sec:motivate}.
Figures \ref{fig:udf-agg} and \ref{fig:java-agg} show the definition of the custom aggregate for the loops in Figures \ref{fig:udf-example} \ref{fig:jdbc-example} respectively. We use the syntax similar to that of Microsoft SQL Server to illustrate these examples. 

\subsection{Fields} \label{subsec:fields}
Conceptually, all variables live at the beginning of the loop can be made fields of the aggregate. While this is not incorrect, it is unnecessary. Therefore we identify a minimal set of fields as follows. Consider the set $V_{\Delta}$ of all variables referenced in the loop body $\Delta$. Let $V_{fetch}$ be the set of variables assigned in the \textit{FETCH} statement, and let $V_{local}$ be the set of variables that are local to the loop body (i.e they are declared within the loop body and are not \textit{live} at the end of the loop.)
The set of variables $V_F$ defined as fields of the custom aggregate is given by the equation:
\begin{equation} \label{eq:fields}
V_F = (V_{\Delta} - (V_{fetch} \cup V_{local})) \cup \{isInitialized\}.
\end{equation}
We have additionally added a variable called \textit{isInitialized} to the field variables set $V_F$. This boolean field is necessary for keeping track of field initialization, and will be described in \refsec{subsec:init}.
For all variables in $V_F$, we place a field declaration statement in the custom aggregate class.

\mysubsection{Illustrations} For the loop in \reffig{fig:udf-example}, 
\vspace{-2mm}
\begin{align*}
    V_{\Delta} &= \{pCost, minCost, lb, suppName, sName\} \\
    V_{fetch} &= \{pCost, sName\} \\
    V_{local} &= \{\} \\
\text{Therefore, using } & \text{\refeq{eq:fields}, we get}\\
    V_{F} &= \{minCost, lb, suppName, isInitialized\} \qedwhite
\end{align*}

For application programs such as the one in \reffig{fig:jdbc-example} that use a data access API like JDBC, the attribute accessor methods (e.g. \textit{getInt(), getString()} etc.) on the \textit{ResultSet} object are treated analogous to the \textit{FETCH} statement. Therefore, local variables to which \textit{ResultSet} attributes are assigned form a part of the $V_{fetch}$ set.
For the loop in \reffig{fig:jdbc-example}, 
\vspace{-2mm}
\begin{align*}
    V_{\Delta} &= \{cumulativeROI, monthlyROI\} \\
    V_{fetch} &= \{monthlyROI\} \\
    V_{local} &= \{\} \\
\text{Therefore, using } & \text{Equation~\refeq{eq:fields}, we get}\\
    V_{F} &= \{cumulativeROI, isInitialized\}  \qedwhite
\end{align*}

\subsection{Init()} \label{subsec:init}
The implementation of the \textit{Init()} method is very simple. We just add a statement that assigns the boolean field \textit{isInitialized} to \textit{false}. Initialization of field variables is deferred to the \textit{Accumulate()} method for the following reason.
The \textit{Init()} does not accept any arguments. Hence if field initialization statements are placed in \textit{Init()}, they will have to be restricted to values that are statically determinable~\cite{uudf}. This is because these values will have to be supplied at aggregate function creation time. In practice it is quite likely that these values are not statically determinable. This could be because (a) they are not compile-time constants but are variables that hold a value at runtime, or (b) there are multiple definitions of these variables that might reach the loop, due to presence of conditional assignments. 

Consider the loop of \reffig{fig:udf-example}. Based on \refeq{eq:fields}, we have determined that the variable \textit{@lb} has to be a field of the custom aggregate. Now, we cannot place the initialization of \textit{@lb} in \textit{Init()} because there is no way to determine the initial value of \textit{@lb} at compile-time using static analysis of the code. This was a restriction in~\cite{uudf} which we overcome by deferring field initializations to \textit{Accumulate()}.

\mysubsection{Illustrations} The \textit{Init()} method is identical in both Figures \ref{fig:udf-agg} and \ref{fig:java-agg}, having an assignment of \textit{isInitialized} to \textit{false}.

\subsection{Accumulate()} \label{subsec:accum}
In a custom aggregate, the \textit{Accumulate()} method encapsulates the important computations that need to happen. We now describe how to systematically construct the parameters and the method definition of \textit{Accumulate()}.

\subsubsection{Parameters.}
Let $P_{accum}$ denote the set of parameters which is identified as \textit{the set of  variables that are used inside the loop body and have at least one reaching definition outside the loop}. The set of candidate variables is computed using the results of reaching definitions analysis (\refsec{subsubsec:reach}). 
More formally, let $V_{use}$ be the set of all variables used inside the loop body. 
For each variable $v$ $\in$ $V_{use}$,
let $U_{CL}(v)$ be the set of all uses of $v$ inside the cursor loop CL.
Now, for each use $u$ $\in$ $U_{CL}(v)$, 
let RD($u$) be the set of all definitions of $v$ that reach the use $u$.
We define a function $R(v)$ as follows.
\begin{equation} \label{eq:rv}
  R(v)=\begin{cases}
    1, & \text{if } \exists d \in RD(u) \mid d \text{ is not in the loop}.\\
    0, & \text{otherwise}.
  \end{cases}
\end{equation}

% Then the set ReachingDefs$(V_{use})$ contains reaching definitions of all variables in $V_{use}$.
% We now define a function $R(v)$ as follows.

% \begin{math}
% \forall \text{ }D(v) \in \text{ReachingDefs}(V_{use}),
% \end{math}
% \begin{equation} \label{eq:rv}
%   R(v)=\begin{cases}
%     1, & \text{if } D(v) \text{ is not in the loop}.\\
%     0, & \text{otherwise}.
%   \end{cases}
% \end{equation}

Checking if a definition $d$ is in the loop or not is a simple set containment check. Using \refeq{eq:rv}, we define $P_{accum}$, the set of parameters for \textit{Accumulate()} as follows.
\begin{align} \label{eq:paccum}
    P_{accum} = \{v \mid v \in V_{use} \land R(v) == 1\}
\end{align}

\subsubsection{Method Definition.}
There are two blocks of code that form the definition of \textit{Accumulate()} --  field initializations and the loop body block. The set of fields $V_{init}$ that need to be initialized is given by the below equation.
\begin{equation} \label{eq:vinit}
V_{init} = P_{accum} - V_{fetch}    
\end{equation}

As mentioned earlier, the boolean field \textit{isInitialized} denotes whether the fields of this class are initialized or not. The first time accumulate is invoked for a group, \textit{isInitialized} is false and hence the fields in $V_{init}$ are initialized. During subsequent invocations, this block is skipped as \textit{isInitialized} would be true. Following the initialization block, the entire loop body $\Delta$ is appended to the definition of \textit{Accumulate()}. 
\mysubsection{Illustrations}
For the loop in \reffig{fig:udf-example}:
\begin{align*}
    P_{accum} &= \{ pCost, sName, pMinCost, pLb \} \\
    V_{init} &= \{ minCost, lb \}
\end{align*}

For the loop in \reffig{fig:jdbc-example}, $P_{accum}$ and $V_{init}$ are as follows:
\begin{align*}
    P_{accum} &= \{ monthlyROI, cumulativeROI \} \\
    V_{init} &= \{ cumulativeROI \}
\end{align*} 
The \textit{Accumulate()} method in Figures \ref{fig:udf-agg} and \ref{fig:java-agg} are constructed based on the above equations as per the template in \reffig{fig:agg-template}.

\subsection{Terminate()} \label{subsec:terminate}
This method returns a tuple of all the field variables ($V_F$) that are \textit{live} at the end of the loop. The set of candidate variables $V_{term}$ are identified by performing a liveness analysis for the module enclosing the cursor loop (e.g. the UDF that contains the loop).
The return type of the aggregate is a tuple where each attribute corresponds to a variable that is live at the end of the loop. The tuple datatype can be implemented using User-Defined Types in most DBMSs. 

\mysubsection{Illustrations}
For the loop in \reffig{fig:udf-example}, $V_{term} = \{ suppName \}$, and for the loop in \reffig{fig:jdbc-example}, $V_{term} = \{ cumulativeROI \}$. For simplicity, since these are single-attribute tuples, we avoid using a tuple and use the type of the attribute as the return type of \textit{Terminate()}.

\section{Query Rewriting}\label{sec:rewrite}
\onecolfigure
{figs/udf-trans}
{The UDF in \reffig{fig:udf-example} rewritten using Aggify.}
{fig:udf-trans}
\onecolfigure
{figs/jdbc-trans}
{The Java method in \reffig{fig:jdbc-example} rewritten using Aggify.}
{fig:jdbc-trans}

For a given cursor loop (Q, $\Delta$), once the custom aggregate $Agg_{\Delta}$ has been created, the next task is to remove the loop altogether and rewrite the query $Q$ into $Q'$ such that it invokes this custom aggregate instead.
Note that $Q$ might be arbitrarily complex, and may contain other aggregates (built-in or custom), GROUP BY, sub-queries and so on.
Therefore, Aggify constructs $Q'$ without modifying $Q$ directly, but by composing $Q$ as a nested sub-query. In other words, Aggify introduces an aggregation on top of $Q$ that contains an invocation to $Agg_{\Delta}$. Note that $Agg_{\Delta}$ is the only attribute that needs to be projected, as it contains all the loop variables that are \textit{live}. In relational algebra, this rewrite rule can be represented as follows:
\begin{equation} \label{eq:rule1a}
    \text{Loop}(Q,\Delta) \implies \mathcal{G}_{\text{Agg}_{\Delta}(P_{\textit{accum}}) \text{ as aggVal}}(Q)
\end{equation}

Note that the parameters to $Agg_{\Delta}$ are the same as the parameters to the \textit{Accumulate()} method ($P_{accum}$). These are either attributes that are projected from $Q$ or variables that are defined earlier. The return value of $Agg_{\Delta}$ (aliased as \textit{aggVal}) is a tuple from which individual attributes can be extracted. The details are specific to SQL dialects. 

\mysubsection{Illustration}
\reffig{fig:udf-trans} shows the output of rewriting the UDF in \reffig{fig:udf-example} using Aggify. Observe the statement that assigns to the variable \textit{@suppName} where the R.H.S is a SQL query. 
For the loop in \reffig{fig:udf-example}, rewriting based on the above rule would result in this SQL query where the custom aggregate \textit{MinCostSuppAgg} is defined as given in \reffig{fig:udf-agg}. Observe that out of the four parameters to the aggregate function, two are attributes of the cursor query Q, one is a local variable in the UDF, and the other is a parameter to the UDF.

\reffig{fig:jdbc-trans} shows the Java method from \reffig{fig:jdbc-example} rewritten using Aggify. Out of the 2 parameters to the aggregate function, one is an attribute from the underlying query, and the other is a local variable. The loop is replaced with a method that advances the \textit{ResultSet} to the first (and only) row in the result, and an attribute accessor method invocation (\textit{getDouble()} in this case) is placed with an assignment to each of the \textit{live} variables (\textit{cumulativeROI} in this case).

\subsection{Order enforcement}
The query $Q$ over which a cursor is defined may be arbitrarily complex. If $Q$ does not have an ORDER BY clause, the DBMS gives no guarantee about the order in which the rows are iterated over. \refeq{eq:rule1a} is in accordance with this, because the DBMS gives no guarantee about the order in which the custom aggregate is invoked as well. Hence the above query rewrite suffices in this case.

However, the presence of ORDER BY in the cursor query $Q$ implies that the loop body $\Delta$ is invoked in a specific order determined by the sort attributes of $Q$. In this case, the above rewriting is not sufficient as it does not preserve the ordering and may lead to wrong results. 
Therefore, Simhadri et. al~\cite{uudf} mention that either there should be no ORDER BY clause in the cursor query, or the database system should allow order enforcement while invoking custom aggregates.
To address this, we now propose a variation of the above rewrite rule that can be used to enforce the necessary order. 

Let $Q_s$ represent a query with an ORDER BY clause where the subscript $s$ denotes the sort attributes. Let $Q$ represent the query $Q_s$ without the ORDER BY clause.
For a cursor loop  ($Q_s, \Delta$), the rewrite rule can be stated as follows:
\begin{equation} \label{eq:rule1b}
    \text{Loop}(Q_s,\Delta) \implies \mathcal{G}_{\text{StreamAgg}_{\Delta}(P_{\textit{accum}}) \text{ as aggVal}}(Sort_s(Q))
\end{equation}

This rule enforces the following two conditions. (i) It enforces the sort operation to be performed before the aggregate is invoked, and (ii) it enforces the \textit{Streaming Aggregate} physical operator to implement the custom aggregate. These two conditions ensure that the order specified in the cursor loop is respected.

\subsection{Discussion}
Once the query is rewritten as described above, Aggify replaces the loop with an invocation to the rewritten query as shown in Figures \ref{fig:udf-trans} and \ref{fig:jdbc-trans}. The return value of the aggregate is assigned to corresponding local variables, which enables subsequent lines of code to remain unmodified. %These modifications are straightforward, and we omit details.
From Figures \ref{fig:udf-trans} and \ref{fig:jdbc-trans}, we can make the following observations.

\begin{itemize}
    \item The cursor query Q remains unchanged, and is now the subquery that appears in the FROM clause.
    \item The transformation is fairly non-intrusive. Apart from the removal of the loop, the rest of the lines of code remain identical, except for a few minor modifications.
    \item This transformation may render some variables as \textit{dead}. Declarations of such variables can be then removed, thereby further simplifying the code~\cite{DFABOOK}. For instance, the variables \textit{@pCost} and \textit{@sName} in \reffig{fig:udf-example} are no longer required, and are removed in \reffig{fig:udf-trans}.
\end{itemize}

The transformed program that is output by Aggify offers the following benefits. It avoids materialization of the cursor query results and instead, the entire loop is now a single pipelined query execution.
In the context of loops in applications that run outside the DBMS (\reffig{fig:jdbc-trans}), this rewrite significantly reduces the amount of data transferred between the DBMS and the client. Further, the entire loop computation which ran on the client now runs inside the DBMS, closer to data. Finally, all these benefits are achieved without having to perform intrusive changes to existing source code, as we observed above. 
% Allows cursors on CCI

\subsection{Aggify Algorithm}

\begin{algorithm} 
    %\color{blue}
    \begin{algorithmic}%[1] 
    \caption{Aggify($G$, $Q$, $\Delta$)} \label{algorithm:aggify}
    
        \Require $G$: CFG of the program augmented with data dependence edges; \\ $Q$: Cursor query; \\ $\Delta$: Subgraph of $G$ for the loop body; \newline
        
        %\Procedure{TransformLoopToAgg}{G, Q, $\Delta$}
        \State $A (L, R, UD, DU)$ \textleftarrow \hphantom{.} Perform DataFlow Analysis on $G$;
        \State \hskip2.0em $L$ \textleftarrow \hphantom{.} Liveness information;
        \State \hskip2.0em $R$ \textleftarrow \hphantom{.} Reachable Definitions;
        \State \hskip2.0em $UD$ \textleftarrow \hphantom{.} Use-Def Chain;
        \State \hskip2.0em $DU$ \textleftarrow \hphantom{.} Def-Use Chain; \newline
        
        \State $V_{\Delta}$ \textleftarrow \hphantom{.} \{Variables referenced in $\Delta$\};
        \State $V_{fetch}$ \textleftarrow  \hphantom{.} \{Vars. assigned in the FETCH statement\};

        \State $V_{field}$ \textleftarrow  \hphantom{.} \{Compute using Equation~\refeq{eq:fields}\};
        \State $P_{accum}$ \textleftarrow  \hphantom{.} \{Compute using Equation~\refeq{eq:paccum}\};
        \State $V_{init}$ \textleftarrow  \hphantom{.} \{Compute using Equation~\refeq{eq:vinit}\};
        \State $V_{term}$ \textleftarrow  \hphantom{.} \{Fields that are live at loop end\};\newline
        
        \State $Agg_{\Delta}$ \textleftarrow  \hphantom{.} Construct aggregate class using template in \\ \reffig{fig:agg-template} and above information;
        \State Register $Agg_{\Delta}$ with the database;\newline
        
    // Replace loop in $G$ with rewritten query
        \If{(Q contains ORDER BY clause)}
            \State $s$ \textleftarrow \hphantom{.} \{ORDER BY attributes\}
            \State Rewrite loop using Equation \refeq{eq:rule1b};     
        \Else
            \State Rewrite loop using Equation \refeq{eq:rule1a};
        \EndIf
    \end{algorithmic} 
\end{algorithm}

The entire algorithm illustrated in sections \refsec{sec:trans} and \refsec{sec:rewrite} is formally presented in Algorithm \ref{algorithm:aggify}. The algorithm accepts $G$, the CFG of the program augmented with data dependence edges, $Q$, the cursor query, and $\Delta$, the subgraph of $G$ corresponding to the loop body. Algorithm \ref{algorithm:aggify} is invoked after all the necessary preconditions in \refsec{subsec:support} are satisfied.

Initially, we perform DataFlow Analyses on $G$ as described in \refsec{sec:background}. The results of these analyses are captured as $A(L, R, UD, DU)$ which consists of Liveness, Reachable definitions, Use-Def and Def-Use chains respectively. Then, these results are used to compute the necessary components for the aggregate definition, namely $V_{\Delta}$, $V_{init}$, $V_{fetch}$, $V_{field}$, $V_{term}$, $P_{accum}$. 
Once all the necessary information is computed, the aggregate definition is constructed using the template in \reffig{fig:agg-template}, and this aggregate (called $Agg_{\Delta}$) is registered with the database engine.

Finally, we rewrite the entire loop with a query that invokes $Agg_{\Delta}$. The rewrite rule is chosen based on whether the cursor query $Q$ has an ORDER BY clause, as described in \refsec{sec:rewrite}.

\subsubsection{Nested cursor loops:} \label{subsubsec:nested}

Cursors can be nested, and our algorithm can handle such cases as well. This can be achieved by first running Algorithm \ref{algorithm:aggify} on the inner cursor loop and transforming it into a SQL query. Subsequently, we can run Algorithm \ref{algorithm:aggify} on the outer loop. An example is provided (L8-W2) in customer workload experiments in~\cite{AGGIFYWL}.

\section{Preserving Semantics} \label{sec:preserve}
We now reason about the correctness of the transformation performed by Aggify, and describe how  the semantics of the cursor loop are preserved. 

Let CL($Q, \Delta$) be a cursor loop, and let $Q'$ be the rewritten query that invokes the custom aggregate. 
The program state comprises of values for all \textit{live} variables at a particular program point. Let $P_0$ denote the program state at the beginning of the loop and $P_n$ denote the program state at the end of the loop, where $n = |Q|$. 
To ensure correctness, we must show that if the execution of the cursor loop on $P_0$ results in $P_n$, then the execution of $Q'$ on $P_0$ also results in $P_n$. We only consider program state and not the database state in this discussion, as our transformation only applies to loops that do not modify the database state. 

Every iteration of the loop can be modeled as a function that transforms the intermediate program state. Formally,
\begin{equation*}
     P_{i} = f(P_{i-1}, T_i) 
\end{equation*}
where $i$ ranges from 1 to $n$. In fact the function $f$ would be comprised of the operations in the loop body $\Delta$.

It is now straightforward to see that the \textit{Accumulate()} method of the custom aggregate constructed by Aggify exactly mimics this behavior. This is because (a) the statements in the loop body $\Delta$ are directly placed in the \textit{Accumulate()} method,  (b) the \textit{Accumulate()} is called for every tuple in $Q$, and (c) the rule in \refeq{eq:rule1b} ensures that the order of invocation of \textit{Accumulate()} is identical to that of the loop when necessary. The fields of the aggregate class\footnote{Note that here $V_F = P_0$. In other words, we consider all variables that are live at the beginning of the loop (i.e. $P_0$) as fields of the aggregate. This is a conservative but correct definition as given in \refsec{subsec:fields}.} and their initialization ensure the inter-iteration program states are maintained. 
%Observe that based on our definition of program state, $P_n \subseteq P_0$. 
From our definition of $V_{term}$ in \refsec{subsec:terminate}, it follows that $P_n = V_{term}$. Therefore the output of the custom aggregate is identical to the program state at the end of the cursor loop.\qedwhite

\section{Enhancements} \label{sec:enhance}
We now present enhancements to Aggify that further improve the rewritten code and broaden its applicability.

\subsection{Simplifying the Custom Aggregate}
During construction of the custom aggregate, we placed the entire loop body in the \textit{Accumulate()} method (\refsec{subsec:accum}). Note that while this is correct, it might not be the most efficient. This is because operations inside \textit{Accumulate()} are not visible to the query optimizer. So, it would be preferable to minimize the number of operations inside \textit{Accumulate()} and expose more operations to the query optimizer.

To this end, we use \textit{loop invariant code motion}~\cite{KENNEDYBOOK}, a standard compiler optimization technique to move loop-invariant operations outside the cursor loop. In fact, we can additionally move loop-variant expressions outside the loop body into the cursor query $Q$, if the expression does not involve any variable that is written to in the loop body. We call this transformation as \textit{acyclic code motion}. This can be highly beneficial, especially if relational operations can be pulled out of the loop and merged with the cursor query.
Simhadri et. al~\cite{uudf} propose that the statements that precede the start of a data dependence cycle~\cite{KENNEDYBOOK} can be safely moved out of the loop. We broaden this further and show that even within statements that are  part of a data dependence cycle, expressions can be pulled out. %Similar techniques have been proposed by~\cite{GUR08}.

A simple example can be illustrated using the loop in \reffig{fig:udf-example}. The boolean expression \textit{(@pCost > @lb)} in the loop qualifies for acyclic code motion because there are no assignments to both \textit{@lb} and \textit{@pCost} in the loop. Therefore this expression can be moved out of the loop and merged with the cursor query as follows:

\begin{verbatim}
SELECT ps_supplycost, s_name, 
            (ps_supplycost > @lb) as boolVal
  FROM PARTSUPP, SUPPLIER 
 WHERE ps_partkey= @pkey 
   AND ps_suppkey= s_suppkey 
\end{verbatim}

This kind of rewrite now brings such expressions into the SQL query and simplifies the logic in the custom aggregate. The query optimizer may then use parallel evaluation or batched execution to efficiently evaluate these expressions. The implementation of acyclic code motion inside Aggify is currently under progress.

\subsection{Optimizing Iterative FOR Loops} \label{subsec:for}
Although the focus of Aggify has been to optimize loops over query results, the technique can in principle be extended to more general FOR loops with a fixed iteration space. 
A FOR loop is a control structure used to write a loop that needs to execute a specific number of times. Such loops are extremely common, and typically have the following structure: 

\begin{verbatim}
  FOR (init; condition; increment) { statement(s);}
\end{verbatim}

Such loops can in fact be rewritten as cursor loops by expressing the iteration space as a relation. For instance, consider this loop

\begin{verbatim}
  FOR (i = 0; i <= 100; i++) { statement(s);}
\end{verbatim}

The iteration space of this loop can be written as a SQL query using either recursive CTEs, or vendor specific constructs (such as DUAL in Oracle). The above loop written using a recursive CTE is given below.

\begin{verbatim}
    with CTE as (  
     select 0 as i  
     union all  
     select i + 1  from CTE where i <= 100  
    )  select * from CTE 
\end{verbatim}

Now, we can define a cursor on the above query with the same loop body. This is now a standard cursor loop and hence, Aggify can be used to optimize it.
Rewriting FOR loops as recursive CTEs can be achieved by extracting the \textit{init}, \textit{condition} and \textit{increment} expressions from the FOR loop, and placing them in the CTE template given above. Note that these expressions may be arbitrarily complex, involving program variables, and the values need not be statically determinable. We omit details of this transformation.% due to lack of space. 

% For example, consider the program snippet in \reffig{fig:iter-loop}. This program trains a simple linear regression model using stochastic gradient descent. It uses 

% for (i = 0; i < numEpochs; i++)
% {
%   declare cursor over dataset;
%   while (@@Fetch_status)
%     compute coeffs

% }

\subsection{Extending existing techniques} \label{subsec:integ}
We now show how Aggify can seamlessly integrate with existing optimization techniques, both in the case of applications that run outside the DBMS and UDFs that run within.

There have been recent efforts to optimize database applications using  techniques from programming languages~\cite{Cheung13, Emani2016, cobra18, Radoi14}. In fact, \cite{Emani2016} mentions that loops over query results could be converted to user-defined aggregates, but do not describe a technique for the same. Aggify can be used as a preprocessing step which can replace loops with equivalent queries which invoke a custom aggregate. Then, the techniques of \cite{Emani2016} can be used to further optimize the program.

The Froid framework~\cite{FroidVldb}, which was also based on the work of Simhadri et. al~\cite{uudf} showed how to transform UDFs into sub-queries that could then be optimized. However, Froid cannot optimize UDFs with loops.
Building upon this idea, Duta et. al~\cite{Grust2019} described a technique to transform arbitrary loops into recursive CTEs. While this avoids function call overheads and context switching, it can sometimes perform worse due to the limited optimizations that currently exist for recursive CTEs. For the specific case of cursor loops, Aggify avoids creating recursive CTEs. 

These ideas can be used together in the following manner: (i) If the function has a cursor loop, use Aggify to eliminate it. (ii) If the function has a FOR loop and the necessary expressions can be extracted from it, use the technique described in \refsec{subsec:for} along with Aggify to eliminate the loop. (iii) If the function has an arbitrary loop with a dynamic iteration space, use the technique of Duta et. al~\cite{Grust2019}. After applying (i), (ii), or (iii), Froid can be used to optimize the resulting loop-free UDF.
\section{Design and Implementation} \label{sec:impl}
The techniques described in this paper can be implemented either inside a DBMS or as an external tool. 
% We have currently implemented Aggify as a standalone tool that performs the transformations as described in this paper. Programs (UDFs or stored procedures) with cursor loops are fed as input. Then, Aggify produces two outputs: a custom aggregate definition, and a rewritten program in which the cursor loop is replaced by an SQL query that invokes this custom aggregate. 
We have currently implemented a prototype of Aggify in Microsoft SQL Server. 
Aggify currently supports cursor loops that are written in Transact-SQL~\cite{TSQL}, and constructs a user-defined aggregate in C\#~\cite{CLRU}. Note that translating from T-SQL into C\# might lead to loss of precision and sometimes different results due to difference in data type semantics. We are currently working on a better approach which is to natively implement this inside the database engine~\cite{AGGIFYTR}.

% Since Aggify is currently an external tool, we have implemented custom aggregates as user-defined aggregates. 
Implementing Aggify inside a DBMS allows the construction of more efficient implementations of custom-aggregates that can be baked into the DBMS itself. 
Also, observe that the rule in \refeq{eq:rule1b} that enforces streaming aggregate operator for the custom aggregate has to be part of the query optimizer. In fact, apart from this rule, there is no other change required to be made to the query optimizer. Every other part of Aggify can be implemented outside the query optimizer. However, we note that since Microsoft SQL Server only supports the Streaming Aggregate operator for user-defined aggregates, we did not have to implement \refeq{eq:rule1b}. 

Froid~\cite{FroidVldb} is available as a feature called \textit{Scalar UDF Inlining}~\cite{UDFInlining} in Microsoft SQL Server 2019. As mentioned in \refsec{subsec:integ}, Aggify integrates seamlessly with Froid, thereby extending the scope of Froid to also handle UDFs with cursor loops. Aggify is first used to replace cursor loops with equivalent SQL queries with a custom aggregate; this is then followed by Froid which can now inline the UDF.

%not cost-based

%Enable cursors on CCI

%CLR semantics mismatch, 

% CLR perf gains

\section{Evaluation} \label{sec:eval}
We now present some results of our evaluation of Aggify.
%\subsection{Experimental Setup} \label{subsec:setup}
Our experimental setup is as follows. Microsoft SQL Server 2019 was run on Windows 10 Enterprise. The machine was equipped with an Intel Quad Core i7, 3.6 GHz, 64 GB RAM, and SSD-backed storage. The SQL cursor loops in UDFs/stored procedures was run directly on this machine, and hence did not involve any network usage. The database-backed applications were run from another desktop machine with similar configuration and 32GB RAM, connected to the DBMS over LAN.

\subsection{Workloads} \label{subsec:setup}
We have evaluated Aggify on many workloads on several data sizes and configurations. We show results based on 3 real workloads, an open benchmark based on TPC-H queries, and several Java programs including an open benchmark. All these queries, UDFs and cursor loops are made available~\cite{AGGIFYWL}.   

\mysubsection{TPC-H Cursor Loop workload}
To evaluate Aggify on an open benchmark that mimics real workloads, we implemented the specifications of a few TPC-H queries using cursor loops. Not all TPC-H queries are amenable to be written using cursor loops, so we have chosen a logically meaningful subset. While this is a synthetic benchmark, it illustrates common scenarios found in real workloads. We report results on the 10GB scale factor. The database had indexes on L\_ORDERKEY and L\_SUPPKEY columns of LINEITEM, O\_CUSTKEY column of ORDERS, and PS\_PARTKEY column of PARTSUPP. For this workload we show a breakdown of results for Aggify, and Aggify+ (Froid applied after Aggify). 

\mysubsection{Real workloads}
We have considered 3 real workloads (proprietary) for our experiments. 
While we have access to the queries and UDFs/stored procedures of these workloads, we did not have access to the data. As a result, we have synthetically generated datasets that suit these workloads. We have also manually extracted required program fragments from these workloads so that we can use them as inputs to Aggify. %These are proprietary and hence we will not be able to show the source code of UDFs or queries for those.
Workload W1 is a CRM application, W2 is a configuration management tool, and W3 is a backend application for a transportation services company. 
Note that we have not combined Aggify with Froid in these workloads, as we did not have access to the queries that invoke these UDFs.

\mysubsection{Java workload}
As mentioned in \refsec{sec:impl}, the implementation of Aggify for Java is ongoing. For the purpose of detailed performance evaluation, we have considered the RUBiS benchmark~\cite{Rubis} and two other examples and manually transformed them using Algorithm~\ref{algorithm:aggify}. One of them is a Java implementation of the minimum cost supplier functionality similar to the example in \reffig{fig:udf-example}. The other is a variant of the example in \reffig{fig:jdbc-example} with 50 columns. These programs are also available in \cite{AGGIFYWL}. Note that Froid is applicable only to T-SQL, so the Java experiments do not use Froid.

\subsection{Applicability of Aggify} \label{subsec:eval-applicability}
\begin{table}
	\small
	%\color{blue}
	\centering	
	\caption{Applicability of Aggify}	
	\vspace{-1em}
	\begin{tabular}{ |c|c|c|c| } 
		\hline
		\textbf{Workload} &  \textbf{RUBiS} & \textbf{RUBBoS} & \textbf{Adempiere}\\
		\hline 
		\text{Total \# of while loops} & 16 & 41 & 127 \\
		\hline
		\text{\# of cursor loops} & 14 (87.5\%)  & 14 (34.14\%) & 109 (85.8\%) \\
		\hline
		\text{Aggify-able} & 14 & 14 & >80 \\
		\hline
	\end{tabular}
	\label{tab:applicability} 
	\vspace{-1em}
\end{table}

We have analyzed several real world workloads and opensource benchmark applications to measure (a) the usage of cursor loops, and (b) the applicability of Aggify on such loops. 
We considered about 5720 databases in Azure SQL Database~\cite{SQLDB} that make use of UDFs in their workload. Across these databases, we came across more than 77,294 cursors being declared inside UDFs\footnote{This analysis was done using scripts that analyze the database metadata and extract the necessary information. We did not have access to manually examine the source code of these propreitary UDFs.}. As explained in \refsec{subsec:applicability}, Aggify can be used to rewrite all these cursor loops. This demonstrates both the wide usage of cursor loops in real workloads, and the broad applicability of Aggify.

Next, we manually analyzed 3 opensource Java applications -- the RUBiS Benchmark~\cite{Rubis}, RUBBos Benchmark~\cite{Rubbos}, and the popular Adempiere~\cite{Adempiere} CRM application. Table~\ref{tab:applicability} shows the results of this analysis. 87.5\% of the loops in the RUBiS benchmark were cursor loops. In RUBBoS, 34\% of the loops were cursor loops. We looked at a subset of files in Adempiere (\~25 files) where more than 85\% of the loops encountered were cursor loops. Table ~\ref{tab:Adempiere} lists these 25 files and the number of aggifyable cursor loops in each. Interestingly all the cursor loops in RUBiS and RuBBoS, and more than 80 loops in Adempiere satisfied the preconditions for Aggify. This shows the use of cursors as well as the applicability of Aggify.

\begin{table}
	\small
	\centering	
	\caption{Analysis of Adempiere Application}		
	\vspace{-1em}
	\begin{tabular}{ |c|c|c|c| } 
	    \hline
		\textbf{File Name} & \textbf{\#While} & \textbf{\#Cursor} & \textbf{\#Aggifyable}\\
		\hline		
		SmjReportLogic & 1 & 1 & 1\\
		\hline
		DbDifference & 36 & 30 & 29\\
		\hline
		WebInfo & 17 & 17 & 17\\
		\hline
		PrintBOM & 7 & 6 & 0\\
		\hline
		MRequestProcessor & 3 & 3 & 3\\
		\hline
		TranslationController & 3 & 3 & 3\\
		\hline
		JDBCInfo & 3 & 3 & 0\\
		\hline
		MIMPProcessor & 3 & 3 & 3\\
		\hline
		MWebServiceType & 3 & 3 & 3\\
		\hline
		SequenceCheck & 3 & 3 & 0\\
		\hline
		ScheduleUtil & 7 & 3 & 0\\
		\hline
		Invoice & 4 & 4 & 4\\
		\hline
		Login & 8 & 8 & 0\\
		\hline
		MSearchDefinition & 2 & 2 & 2\\
		\hline
		MWebService & 2 & 1 & 1\\
		\hline
		MLocator & 2 & 2 & 0\\
		\hline
		DocumentTypeVerify & 2 & 2 & 0\\
		\hline
		Payment & 1 & 1 & 1\\
		\hline
		ImpFormat & 5 & 1 & 1\\
		\hline
		Compiere & 2 & 0 & 0\\
		\hline
		MStorage & 5 & 5 & 5\\
		\hline
		FixPaymentCashLine & 2 & 2 & 1\\
		\hline
		CreateFrom & 2 & 2 & 2\\
		\hline
		MTreeNodeCMS & 1 & 1 & 1\\
		\hline
		MTreeNodeCMC & 1 & 1 & 1\\
		\hline
% 		MTreeNodeCMC & 1 & 1 & 1\\
% 		\hline
		
	\end{tabular}
	\label{tab:Adempiere} 
%	\vspace{-0.7em}
\end{table}

\subsection{Performance improvements} \label{subsec:perf}
\begin{figure*}[t!]
	\minipage{0.65\columnwidth}
	\includegraphics[width=\linewidth]{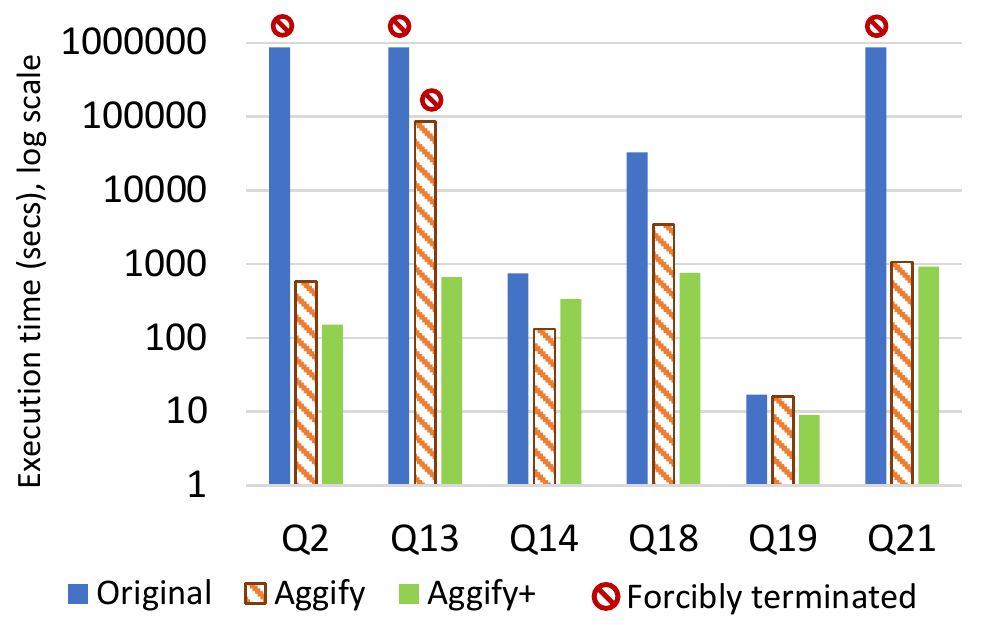}
	\caption*{(a) TPC-H cursor loop workload}\label{fig:tpch10g-warm}
	\endminipage
	\hspace{2mm}
	\minipage{0.65\columnwidth}
	\includegraphics[width=\linewidth]{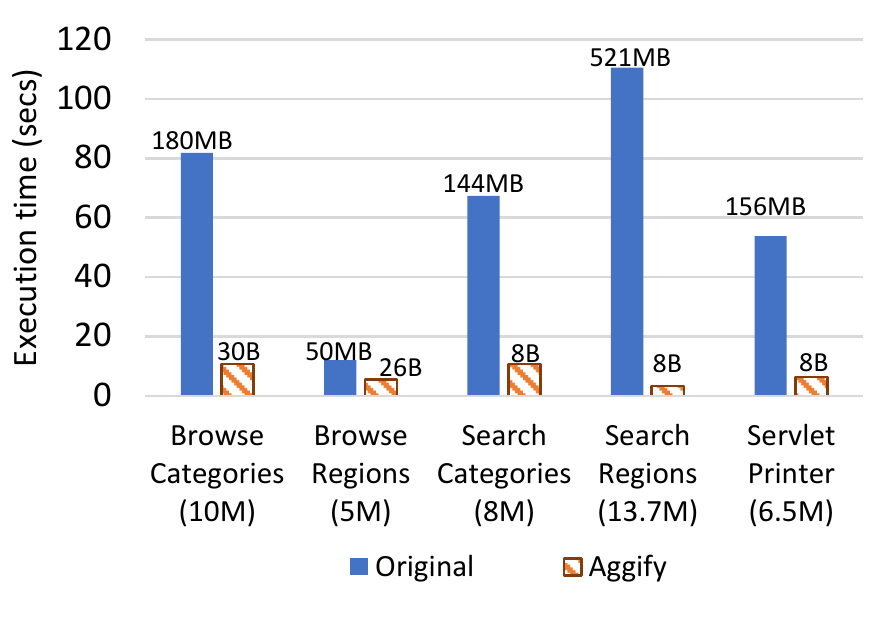}
	\caption*{(b) Java workload}\label{fig:java-perf}
	\endminipage
	\hspace{2mm}
	\minipage{0.65\columnwidth}
	\includegraphics[width=\linewidth]{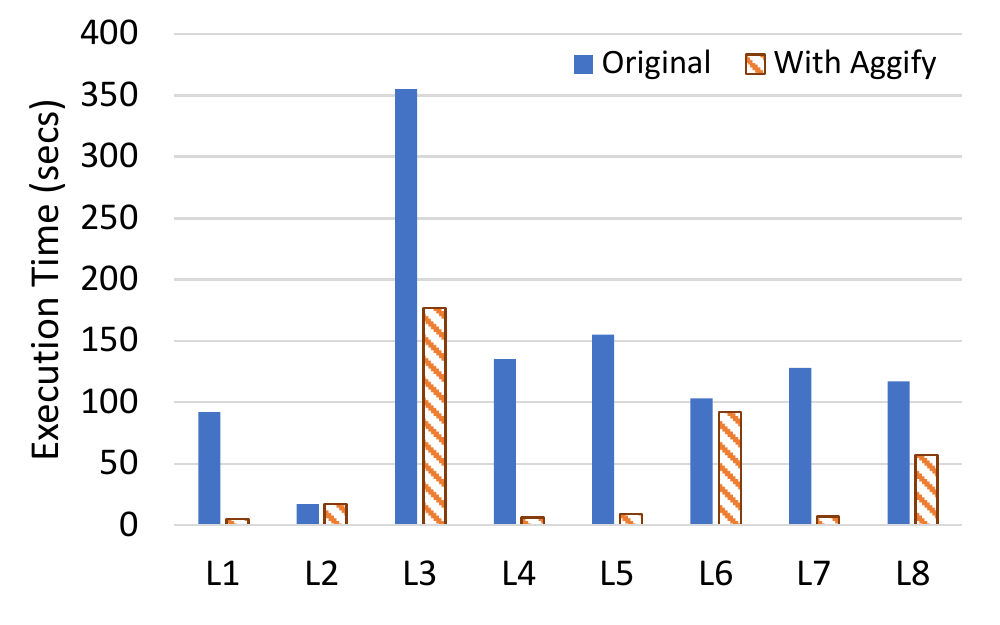}
    \caption*{(c) Customer workloads}\label{fig:cust-all-perf}
	\endminipage
	\vspace{-3mm}
	\caption{Performance improvements across multiple workloads} \label{fig:perf}
\end{figure*}

% \onecolfigure
% {figs/eval/tpch10g-warm}
% {Performance with and without Aggify for the TPC-H cursor loop workload.}
% {fig:tpch10g-warm}

% \onecolfigure
% {figs/eval/cust-all-perf}
% {Performance with and without Aggify for real workloads (selected from W1,W2,W3).}
% {fig:cust-all-perf}

We now show the results of our experiments to determine the overall performance gains achieved due to Aggify. 

\subsubsection{TPC-H workload}
First, we consider the TPC-H cursor loop workload and show the results on a 10 GB database with warm buffer pool. Similar trends have been observed with 100GB as well, with both warm and cold buffer pool configurations. 
\reffig{fig:perf}(a) shows the results for 6 queries from the workload~\cite{AGGIFYWL}. The solid column (in blue) represents the original program. The striped column (orange) represents results of applying Aggify. The green column (indicated as `Aggify+' shows the results of applying Froid's technique after Aggify enables it, as described in \refsec{subsec:integ}.

On the x-axis, we indicate the query number, and on the y-axis, we show execution time in seconds, in log scale. Observe that for queries Q2, Q13 and Q21, we have a $\boldsymbol{\oslash}$ symbol above the column corresponding to the original query with the loop, and for Q13, we have that symbol even for the Aggify column. This means that we had to forcibly terminate these queries as they were running for a very long time (\textgreater 10  days for Q2,  \textgreater 22 days for Q13 and \textgreater 9 hours for Q21). We observe that Q2, Q14, Q18 and Q21 offer at least an order of magnitude improvement purely due to Aggify alone.
When Aggify is combined with Froid, we see further improvements in Q2, Q13, Q18 and Q19. Q13 results in a huge improvement of 3 orders of magnitude due to the combination. Note that without Aggify, Froid will not be able to rewrite these queries at all. Q21 does not lead to any additional gains from Froid, while Q14 slows down slightly due to Froid.

% \renewcommand{\arraystretch}{1.2}
% \begin{table*}
% 	\small
% 	\centering	
% 	\caption{Benefits of Aggify on customer workloads.}	
% 	\vspace{-1em}
% 	\begin{tabular}{ |c|c|c|c|c|c|c| } 
% 		\hline
% 		\textbf{Loop} & \textbf{Workload} & \textbf{Original} & \textbf{w/Aggify} & \textbf{Improvement factor} & \textbf{Iteration count} & \textbf{Comments} \\
% 		\hline		
% 		L1 & W2 & 92s & 5s & 18x & 5M & No table variable inserts\\
% 		\hline
% 		L2 & W1 & 17s & 17s & 1x & 10K & Large number of table inserts\\
% 		\hline
% 		L3 & W1 & 355s & 177s & 2x & 9K & Includes table inserts\\
% 		\hline
% 		L4 & W2 & 135s & 6s & 22x & 7M & No table variable inserts\\
% 		\hline
% 		L5 & W2 & 155s & 9s & 17x & 7M & No table variable inserts\\
% 		\hline
% 		L6 & W2 & 103s & 92s & 1.1x & 40K & One table insert statement per iteration\\
% 		\hline
% 		L7 & W3 & 128s & 7s & 18x & 7M & No table variable inserts\\
% 		\hline
% 		L8 & W2 & 117s & 57s & 2x & 3*2M & Nested loop (outer * inner)\\
% 		\hline
% 	\end{tabular}
% 	\label{tab:cust} 
% %	\vspace{-0.7em}
% \end{table*}
% \renewcommand{\arraystretch}{1.0}

\begin{table}
	\small
	\centering	
	\caption{Loops from customer workloads.}		
	\vspace{-1em}
	\begin{tabular}{ |c|c|c| } 
	    \hline
		\textbf{Loop} & \textbf{\#Iterations} & \textbf{Comments}\\
		\hline		
		L1(W2) & 5M & No table variable inserts\\
		\hline
		L2(W1) & 10K & Large number of table inserts\\
		\hline
		L3(W1) & 9K & Includes table inserts\\
		\hline
		L4(W2) & 7M & No table variable inserts\\
		\hline
		L5(W2) & 7M & No table variable inserts\\
		\hline
		L6(W2) & 40K & Includes table inserts\\
		\hline
		L7(W3) & 7M & No table variable inserts\\
		\hline
		L8(W2) & 3*2M & Nested loop (outer * inner)\\
		\hline
	\end{tabular}
	\label{tab:cust} 
%	\vspace{-0.7em}
\end{table}

\subsubsection{Java workload}
\reffig{fig:perf}(b) shows the results of running Aggify on 5 loops from the RUBis~\cite{Rubis} benchmark. The x-axis indicates the 5 scenarios along with the number of iterations of the loop (given in parenthesis), and y-axis shows the execution time. As before, the blue column indicates the original program, and the dashed orange column indicates the results with Aggify. Note that Froid is applicable only to T-SQL and not to Java.
%We have also shown the amount of data transferred in each case, above the columns. All the loops used in the experiment are made available~\cite{AGGIFYWL}. 
We observe that Aggify improves performance for all these scenarios. Here the beneifts due to Aggify stem mainly from the huge reduction in data transfer between the database server and the client Java application. 

\subsubsection{Real workloads}
Now, we consider some loops that we encountered in customer workloads W1, W2 and W3 and run them with and without Aggify. \reffig{fig:perf}(c) shows the results of using Aggify on 8 of these loops. The y-axis shows execution time in seconds for loops L1-L8. Table~\ref{tab:cust} shows additional information about the loops chosen including the iteration count for each loop, and whether the loop performed any inserts on table variables. 
We observe improvements in most cases, ranging from 2x to 22x. Note that loop L8 in \reffig{fig:perf}(c) is a nested cursor loop that gives more than 2x gains. Loops L2 and L6 iterate over a relatively small number of tuples compared to the others. This is one cause for the small or no performance gains.
Also, these two loops included many statements that inserted values into temporary tables or table variables. For such statements, in our implementation of Aggify, we have to make a connection to the database explicitly in order to insert these tuples. That adds additional overhead, which could be avoided if the aggregate is implemented natively inside the DBMS (\refsec{sec:impl}). 

%\mysubsection{Impact on concurrency and throughput}

\subsection{Resource Savings}
% \onecolfigure
% {figs/eval/tpch10g-warmReads}
% {Comparison of logical reads for the TPC-H cursor loop benchmark.}
% {fig:tpch10g-warmReads}
\newcolumntype{g}{>{\columncolor{columbiablue}}c}
\begin{table}
	\small
	\color{black}
%	\centering	
	\caption{Comparison of logical reads for the TPC-H cursor loop benchmark.}	
	\vspace{-1em}
	\begin{tabular}{ |c|c|c|c|c|c|} 
		\hline
		\multirow{2}{*}{\textbf{Qry}} & \multirow{2}{*}{\textbf{Original}}&
		\multirow{2}{*}{\textbf{Aggify}}&
		\multirow{2}{*}{\textbf{Aggify+}}&
        \multicolumn{2}{c|}{\textbf{Savings w.r.t Original}}\\
        \cline{5-6}
        & & & & \textbf{Aggify} & \textbf{Aggify+} \\
		%& \multicolumn{1}{|p{1.4cm}|}{\centering \textbf{Aggify+} \\ \textbf{w.r.t Orig.}}\\
		\hline		
		Q2 & $\boldsymbol{\oslash}$ & 38.1M
                            & 54.4M & NA & NA\\
		\hline
		Q13 & $\boldsymbol{\oslash}$ & $\boldsymbol{\oslash}$ & 0.26M & NA & NA\\
		\hline
		Q14 & 553M & 11.7M & 231M & 541M & 322M \\
		\hline
		Q18 & 405M &  120M & 293M & 285M & 113M\\
		\hline
		Q19 & 1.11M & 1.11M & 1.11M & 4528 & 4611\\
		\hline
		Q21 & $\boldsymbol{\oslash}$ & 464M & 616M & NA & NA\\
		\hline
	\end{tabular}
	\label{tab:resource} 
	\vspace{-0.7em}
\end{table}

In addition to performance gains, Aggify also reduces resource consumption, primarily disk IO. This is because cursors end up materializing query results to disk, and then reading from the disk during iteration, whereas the entire loop runs in a pipelined manner with Aggify. 
To illustrate this, we measured the logical reads incurred on our workloads. Table~\ref{tab:resource} shows these numbers (in millions) for 6 queries from the TPC-H cursor loop benchmark. For the original programs, we have the numbers for 3 queries as we had to forcibly terminate the others as mentioned in \refsec{subsec:perf}. 

From the results, we can see that Aggify significantly brings down the required number of reads. Q14 and Q18 show huge reductions (58\% and 27\% respectively). Table~\ref{tab:resource} shows the breakup of logical reads with \textit{Aggify} alone (column 3), and \textit{Aggify+} (column 4) which denotes Aggify with Froid. Columns 5 and 6 denote the savings in logical reads due to Aggify and Aggify+. Interestingly, we observe that using Froid with aggify results in more logical reads, but improves execution time.

\subsection{Scalability} \label{subsec:scale}

\begin{figure*}[t!]
	\minipage{0.65\columnwidth}
	\includegraphics[width=\linewidth]{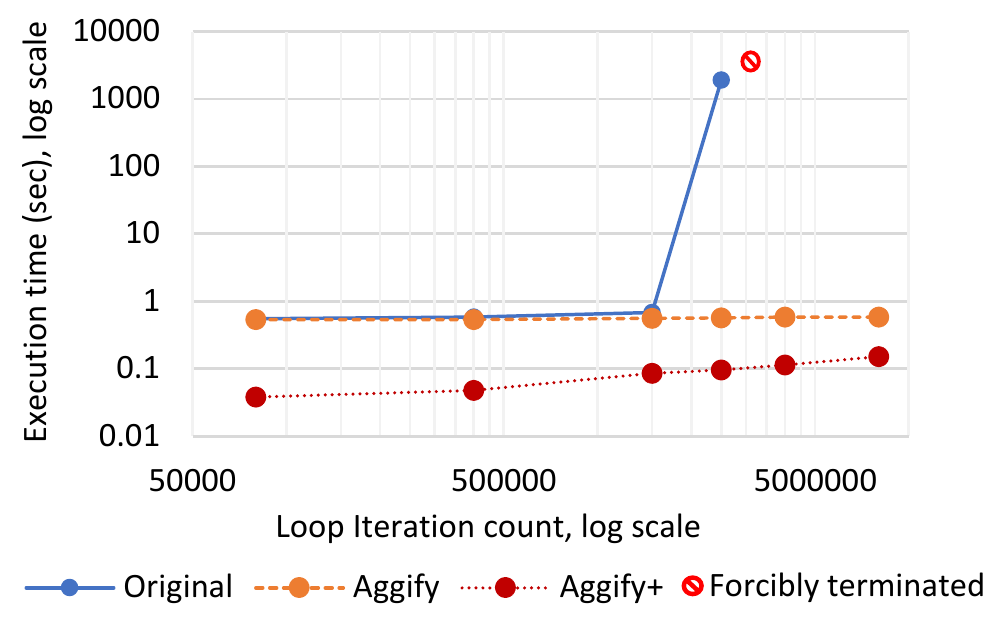}
	\caption*{(a) MinCostSupplier (SQL).}\label{fig:tpch10g-varycard}
	\endminipage
	\hspace{2mm}
	\minipage{0.65\columnwidth}
	\includegraphics[width=\linewidth]{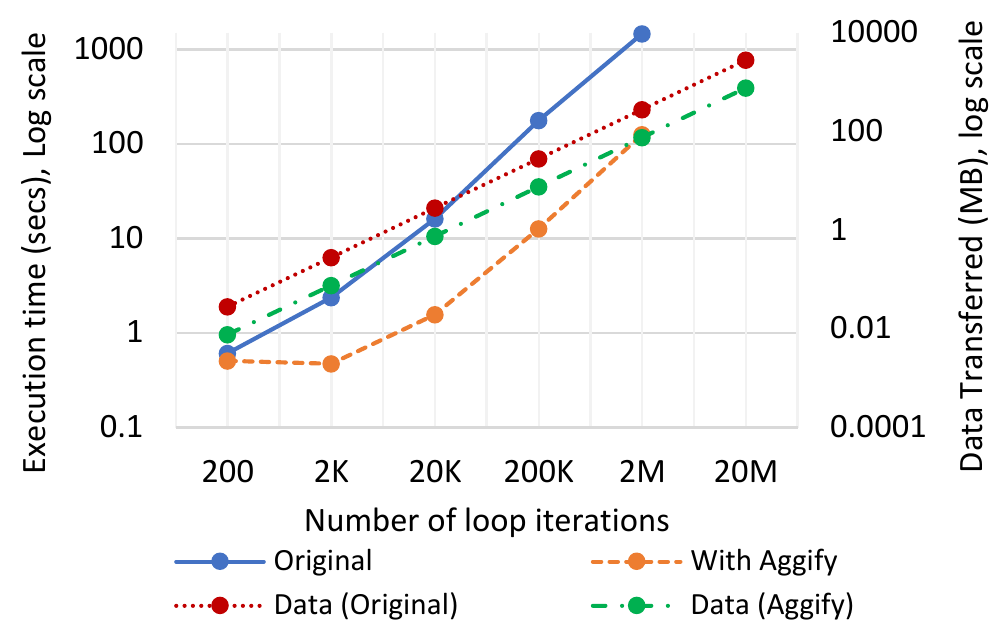}
	\caption*{(b) MinCostSupplier (Java)}\label{fig:mcs-java-varycard10g}
	\endminipage
	\hspace{2mm}
	\minipage{0.65\columnwidth}
	\includegraphics[width=\linewidth]{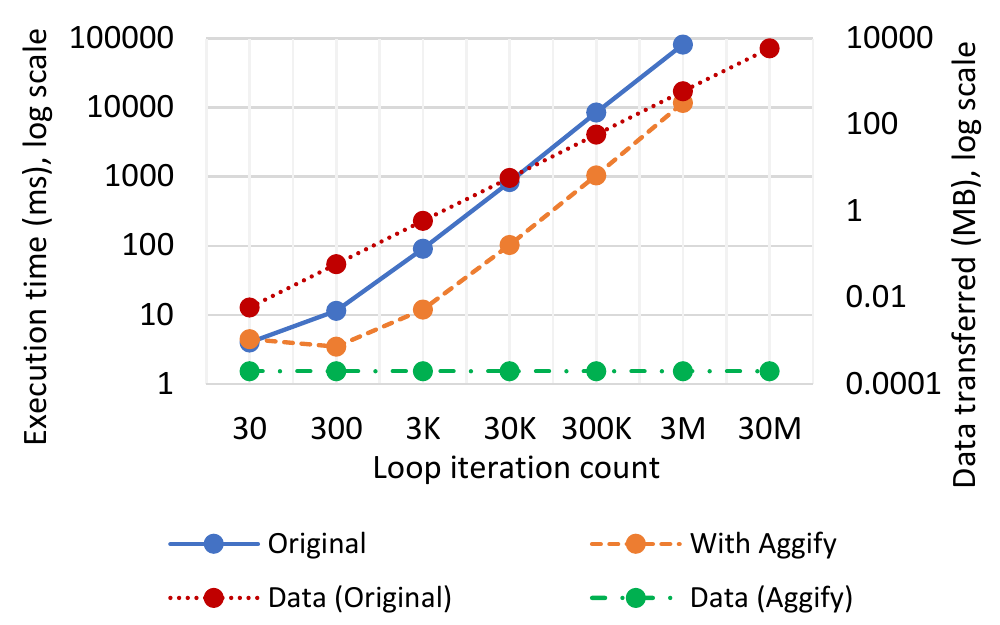}
    \caption*{(c) CumulativeROI (Java).}\label{fig:roi-vary-card}
	\endminipage
	\vspace{-3mm}
	\caption{Scalability across multiple workloads (varying loop iteration counts).}
	\label{fig:scale}
\end{figure*}

% \begin{figure*}[t!]
% 	\minipage{0.65\columnwidth}
% 	\includegraphics[width=\linewidth]{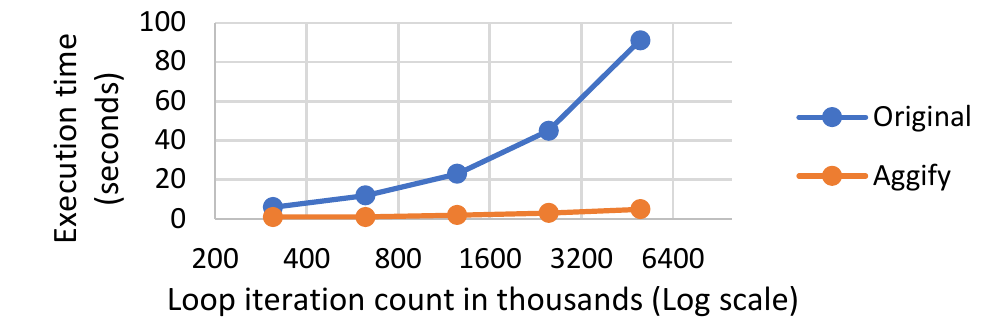}
% 	\caption*{Scalability for loop L1 (Workload W2)}\label{fig:L1-W2_card}
% 	\endminipage
% 	\hspace{2mm}
% 	\minipage{0.65\columnwidth}
% 	\includegraphics[width=\linewidth]{figs/eval/scalability/rubis_scalability_log.pdf}
% 	\caption*{Scalability for Browse Categories (Rubis Workload)}\label{fig:Rubis-Scalability}
% 	\endminipage

% 	\vspace{-3mm}
% 	\caption{Scalability across Open source and customer workloads (varying loop iteration ).}
% 	\label{fig:scale}
% \end{figure*}

\onecolfigure
{figs/eval/scalability/TMW-scalability}
{Scalability for loop L1 (workload W2)}
{fig:tmw-scale}

\onecolfigure
{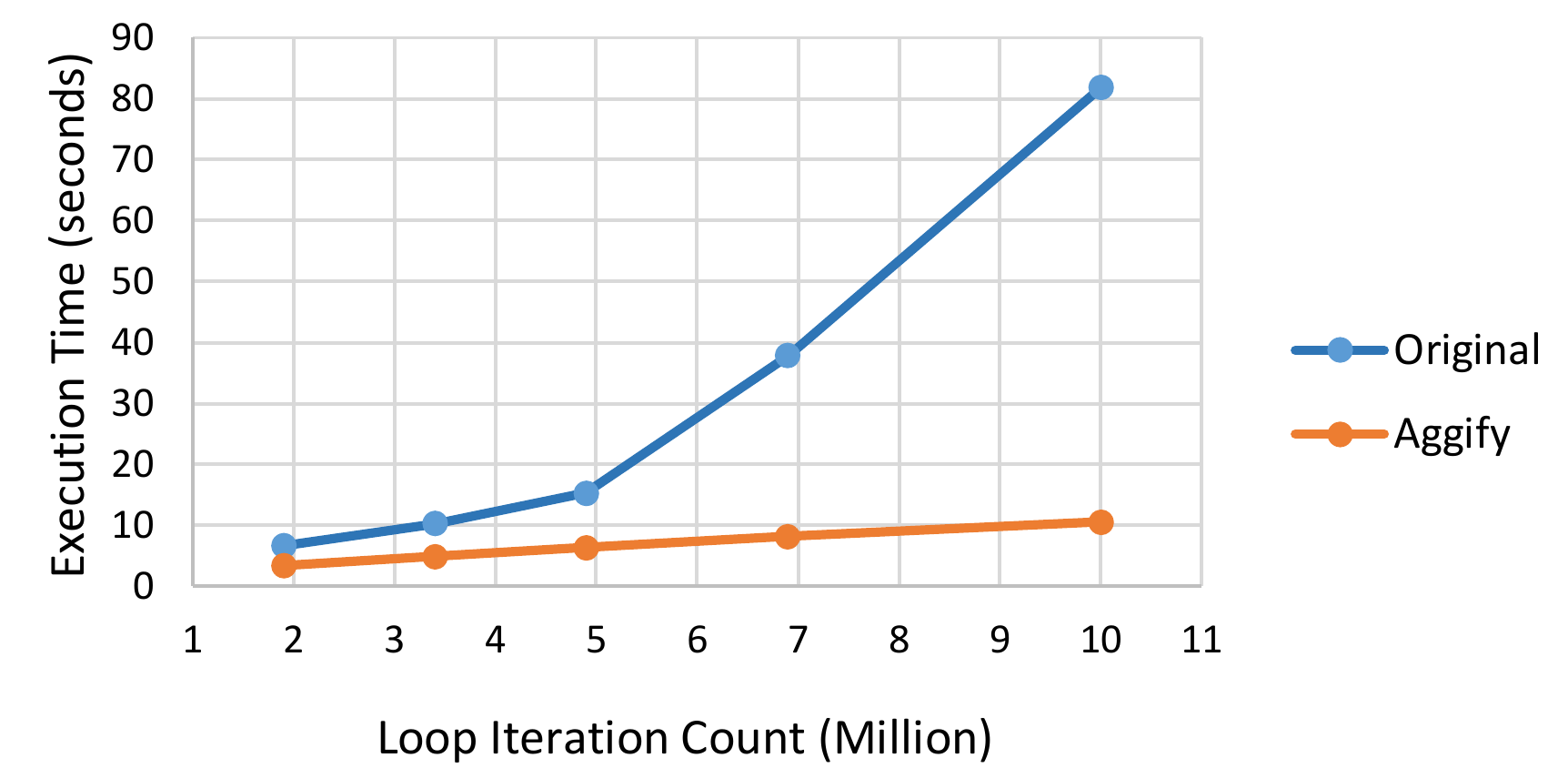}
{Scalability for Browse categories (Rubis Workload)}
{fig:rubis-scale}

% \onecolfigure
% {figs/eval/mcs-java-varycard10g}
% {Performance and data transferred with and without Aggify while varying loop iteration counts for MinCostSupplier (Java).}
% {fig:mcs-java-varycard10g}

% \onecolfigure
% {figs/eval/roi-vary-card}
% {Performance and data transferred with and without Aggify while varying loop iteration counts for CumulativeROI (Java).}
% {fig:roi-vary-card}

We now show the results of our experiments to evaluate the scalability of Aggify with varying data sizes. Figures \reffig{fig:scale}(a), (b), (c), \reffig{fig:tmw-scale} and \reffig{fig:rubis-scale} show the results for 5 experiments described below. The x-axis shows the number of loop iterations, and y-axis shows the execution time in in all the experiments.

\mysubsection{Experiment 1}
\reffig{fig:scale}(a) shows the results for TPC-H query Q2. We show the results for the original UDF (blue), the UDF transformed using Aggify (orange), and further applying Froid (red) indicated as \textit{Aggify+}. We observe that for smaller sizes, Aggify does not offer any improvement by itself. Beyond a certain point, the original program degrades drastically, while Aggify stays constant. For \textit{Aggify+}, we observe about an order of magnitude improvement in performance all through. 

\mysubsection{Experiment 2}
Next, we consider the Java implementation of minimum cost supplier functionality. The original program first runs a query that retrieves the required number of parts, and then the loop iterates over these parts, and computes the minimum cost supplier for each. We restrict the number of parts using a predicate on P\_PARTKEY. By varying its value, we can control the iteration count of the loop. There were 2 million tuples in the PART table, and hence we vary the iteration count from 200 to 2 million in multiples of 10. The transformed Java program eliminates this loop completely, and executes a query that makes use of the custom aggregate \textit{MinCostSuppAgg}. 

\reffig{fig:scale}(b) shows the results of this experiment with warm cache. The solid line (in blue) and the dashed line (in orange) represent the original program and the transformed program respectively. We observe that at smaller number of iterations, the benefits are lesser, but beyond 2K iterations, we see a consistent improvement by an order of magnitude. 

\mysubsection{Experiment 3}
We now consider a variant of the example described in \refsec{java-motivation} (\reffig{fig:jdbc-example}) that computes the cumulative rate of return on investments. The table had 50 columns which store the monthly rate of return per investment category for that investor. Here, we used the TOP keyword in SQL to control the iteration count of the loop. There were 3 million rows in the table; we varied the count from 30 to 3 million in multiples of 10. 
\reffig{fig:scale}(c) shows the results of this experiment. Similar to the earlier experiment, we see that beyond 3K iterations, Aggify starts to offer an order of magnitude improvement, all the way up to 3 million.

This transformation uses Aggify to eliminate the loop, and then follows the technique of ~\cite{Emani2016} as described in \refsec{subsec:integ}. Without Aggify, the technique of \cite{Emani2016} will not be able to translate this loop into SQL. The benefits for these two experiments are due to a combination of (i) pushing compute from the remote Java application into the DBMS, (ii) reducing the amount of data transferred from the DBMS to the application, and (iii) the SQL translation of ~\cite{Emani2016}.

\mysubsection{Experiment 4}
We consider loop L1 from the real workload W1 (the loop is given in~\cite{AGGIFYWL}) and vary loop iteration count; the results are given in \reffig{fig:tmw-scale}. The benefits of Aggify get better with scale, similar to the other scalability experiments. These benefits arise due to pipelining as well as reduction in data movement.

\mysubsection{Experiment 5}
Here, we evaluate the scalability of Aggify on the Rubis workload (Browse categories). The graph in figure \reffig{fig:rubis-scale} shows results similar to other scalability experiments.

\subsection{Data Movement}
One of the key benefits due to Aggify is the reduction in data movement from a remote DBMS to  client applications. We now measure the magnitude of data moved, and show how Aggify significantly reduces data movement. The results of this experiment for the MinCostSupplier and Cumulative ROI Java programs are plotted in Figures \ref{fig:scale}(b) and \ref{fig:scale}(c) using the secondary y-axis. In both figures, the dotted line (red) shows the data moved from the DBMS to the client for the original program in megabytes, and the dash-dot line (green) shows the data movement for the rewritten program. 

For the MinCostSupplier experiment, the original program ends up transferring $(140*n)$ bytes of data where $n$ is the number of iterations (i.e. number of parts), assuming 4-byte integers (P\_PARTKEY), 9-byte decimals(PS\_SUPPLYCOST) and 25-byte varchars (S\_NAME). The rewritten program transfers only $(38*n)$ bytes, resulting in a  reduction of 3.6x. For the CumulativeROI experiment, the original program transfers 200 bytes per iteration (assuming 4-byte floating point values). At 30 million tuples, this is 6GB of data transfer! Aggify only returns the result of the computation, since the entire loop is now executing inside the DBMS. This result is a single tuple with 50 floating point values (200 bytes) irrespective of the number of iterations. %From this experiment it is clear that Aggify can bring about huge benefits to database-backed applications.

\mysubsection{Overheads} Aggify performs the rewrite by making one pass over the cursor loop and the enclosing program fragment. We observe that this overhead is negligible in all our experiments.
\section{Related Work} \label{sec:relwork}
Optimization of loops in general, has been an active area of research in the compilers/PL community. Techniques for loop parallelization, tiling, fission, unrolling etc. are mature, and are part of state-of-the-art compilers~\cite{KENNEDYBOOK,MUCHNICK}. 
Lieuwen and DeWitt~\cite{Lieuwen92} describe techniques to optimize set iteration loops in object oriented database systems (OODBs). They show how to extend compilers to include database-style optimizations such as join reordering. 

There have been recent works that have explored the use of program synthesis to address problems such as (a) optimization of applications that use ORMs~\cite{Cheung13}, (b) translation of imperative programs into the Map Reduce paradigm~\cite{Radoi14, Ahmad18}. In contrast to these works, Aggify relies on program analysis and query rewriting. Further, Aggify expresses an entire loop as a relational aggregation operator. %Cheung et. al~\cite{Cheung13} aim to synthesize equivalent SQL from application code, while Radoi et. al~\cite{Radoi14, Ahmad18} show how to synthesize map and reduce functions.
Cheung et al.~\cite{CHEUNG12} show how to partition database application code such that part of the code runs inside the DBMS as a stored procedure. %They use application profiling and static analysis to identify this partitioning. 
Aggify also pushes computation into the DBMS, but moves entire cursor loops as an aggregate function thereby leveraging optimization techniques for aggregate functions~\cite{COH06}.

The idea of expressing loops as custom aggregates was first proposed by Simhadri et. al.~\cite{uudf} as part of the UDF decorrelation technique. Aggify is based on this idea. We (i) formally characterize the class of cursor loops that can be transformed into custom aggregates, (ii) relax a pre-condition given in ~\cite{uudf} thereby expanding the applicability of this technique, and (iii) show how this technique extends to FOR loops and applications that run outside the DBMS.

The DBridge line of work~\cite{GUR08, Emani2016, cobra18} has had many contributions in the area of optimizing data access in database applications using static analysis and query rewriting. \cite{GUR08} consider the problem of rewriting loops to make use of parameter batching. Emani et. al~\cite{Emani2016} describe a technique to translate imperative code to equivalent SQL. Recently, there have been efforts to optimize UDFs by transforming them into sub-queries or recursive CTEs~\cite{FroidVldb, Grust2019}. 
As we have shown in \refsec{subsec:integ}, Aggify can be used in conjunction with all these techniques leading to better performance.

%\cite{sqloop18}

%%%%%%%%%%%%%%%%%%%%%%%%%%%%%%%%%%
%\cite{Rheinlander2017}

%\cite{Cheung13}
%\cite{Radoi14}
%\cite{Ahmad18}

%\cite{uudf}
%\cite{Emani2016, cobra18}
%\cite{Grust2019}
%\cite{COH06}
%\cite{FroidVldb}
%Citation: \cite{Gal01}

% the following ways (i) by characterizing of the class of cursor loops that can be transformed into custom aggregates, (ii) relaxing the pre-conditions in ~\cite{uudf} thereby significantly expanding the applicability of this technique, and (iii) demonstrating how we can build parallelizable aggregates thereby resulting in better plans. 
%(iii) they cannot handle multiple reaching definitions, (iv) we dont require the values of init variables to be statically determinable.

%Notion of batch safety is given in Gur08

\section{Conclusion} \label{sec:concl}
Although it is well-known that set-oriented operations are generally more efficient compared to row-by-row operations, there are several scenarios where cursor loops are preferred, or are even inevitable. However, due to many reasons that we detail in this paper, cursor loops can result not only in poor performance, but also affect concurrency and resource consumption. Aggify, the technique presented in this paper, addresses this problem by automatically replacing cursor loops with SQL queries that invoke a custom aggregates that are systematically constructed based on the loop body. It performs this transformation while guaranteeing that the semantics of the loop are preserved. Our evaluation on benchmarks and real workloads show the potential benefits of such a technique. We believe that Aggify, can make a strong positive impact on real-world workloads both in database-backed applications as well as UDFs and stored procedures.

%% Add acknowledgements to Venkatesh

%%
%% The next two lines define the bibliography style to be used, and
%% the bibliography file.
\bibliographystyle{ACM-Reference-Format}
\bibliography{aggify}

%%% -*-BibTeX-*-
%%% Do NOT edit. File created by BibTeX with style
%%% ACM-Reference-Format-Journals [18-Jan-2012].

\begin{thebibliography}{41}

%%% ====================================================================
%%% NOTE TO THE USER: you can override these defaults by providing
%%% customized versions of any of these macros before the \bibliography
%%% command.  Each of them MUST provide its own final punctuation,
%%% except for \shownote{}, \showDOI{}, and \showURL{}.  The latter two
%%% do not use final punctuation, in order to avoid confusing it with
%%% the Web address.
%%%
%%% To suppress output of a particular field, define its macro to expand
%%% to an empty string, or better, \unskip, like this:
%%%
%%% \newcommand{\showDOI}[1]{\unskip}   % LaTeX syntax
%%%
%%% \def \showDOI #1{\unskip}           % plain TeX syntax
%%%
%%% ====================================================================

\ifx \showCODEN    \undefined \def \showCODEN     #1{\unskip}     \fi
\ifx \showDOI      \undefined \def \showDOI       #1{#1}\fi
\ifx \showISBNx    \undefined \def \showISBNx     #1{\unskip}     \fi
\ifx \showISBNxiii \undefined \def \showISBNxiii  #1{\unskip}     \fi
\ifx \showISSN     \undefined \def \showISSN      #1{\unskip}     \fi
\ifx \showLCCN     \undefined \def \showLCCN      #1{\unskip}     \fi
\ifx \shownote     \undefined \def \shownote      #1{#1}          \fi
\ifx \showarticletitle \undefined \def \showarticletitle #1{#1}   \fi
\ifx \showURL      \undefined \def \showURL       {\relax}        \fi
% The following commands are used for tagged output and should be
% invisible to TeX
\providecommand\bibfield[2]{#2}
\providecommand\bibinfo[2]{#2}
\providecommand\natexlab[1]{#1}
\providecommand\showeprint[2][]{arXiv:#2}

\bibitem[\protect\citeauthoryear{??}{Ade}{[n.d.]}]%
        {Adempiere}
 \bibinfo{year}{[n.d.]}\natexlab{}.
\newblock \bibinfo{title}{{Adempiere: Opensource ERP Software}}.
\newblock
\newblock
\urldef\tempurl%
\url{http://adempiere.net/}
\showURL{%
\tempurl}


\bibitem[\protect\citeauthoryear{??}{AGG}{[n.d.]a}]%
        {AGGIFYWL}
 \bibinfo{year}{[n.d.]}\natexlab{a}.
\newblock \bibinfo{title}{{Aggify Workloads: TPC-H Cursor Loops, Java loops,
  real workloads and open source workloads }}.
\newblock
\newblock
\urldef\tempurl%
\url{https://github.com/surabhi236/Aggify_workloads}
\showURL{%
\tempurl}


\bibitem[\protect\citeauthoryear{??}{SQL}{[n.d.]a}]%
        {SQLDB}
 \bibinfo{year}{[n.d.]}\natexlab{a}.
\newblock \bibinfo{title}{{Azure SQL Database}}.
\newblock
\newblock
\urldef\tempurl%
\url{https://azure.microsoft.com/en-us/services/sql-database/}
\showURL{%
\tempurl}


\bibitem[\protect\citeauthoryear{??}{AGG}{[n.d.]b}]%
        {AGGContract}
 \bibinfo{year}{[n.d.]}\natexlab{b}.
\newblock \bibinfo{title}{{CLR User-Defined Aggregates - Requirements}}.
\newblock
\newblock
\urldef\tempurl%
\url{https://docs.microsoft.com/en-us/sql/relational-databases/clr-integration-database-objects-user-defined-functions/clr-user-defined-aggregates-requirements?view=sql-server-ver15}
\showURL{%
\tempurl}


\bibitem[\protect\citeauthoryear{??}{CUR}{[n.d.]a}]%
        {CURSE3}
 \bibinfo{year}{[n.d.]}\natexlab{a}.
\newblock \bibinfo{title}{{Cursors - Curse or blessing?}}
\newblock
\newblock
\urldef\tempurl%
\url{https://social.msdn.microsoft.com/Forums/sqlserver/en-US/c34ea336-42cd-47ab-8dbe-6a1b7c0d5783/cursors-curse-or-blessing}
\showURL{%
\tempurl}


\bibitem[\protect\citeauthoryear{??}{JDB}{[n.d.]}]%
        {JDBCCURSOR}
 \bibinfo{year}{[n.d.]}\natexlab{}.
\newblock \bibinfo{title}{{Java Documentation: Retrieving and Modifying Values
  from Result Sets}}.
\newblock
\newblock
\urldef\tempurl%
\url{https://docs.oracle.com/javase/tutorial/jdbc/basics/retrieving.html}
\showURL{%
\tempurl}


\bibitem[\protect\citeauthoryear{??}{LIS}{[n.d.]}]%
        {LISTAGG}
 \bibinfo{year}{[n.d.]}\natexlab{}.
\newblock \bibinfo{title}{{LISTAGG: Oracle Database SQL Language Reference}}.
\newblock
\newblock
\urldef\tempurl%
\url{https://docs.oracle.com/cd/E11882_01/server.112/e41084/functions089.htm#SQLRF30030}
\showURL{%
\tempurl}


\bibitem[\protect\citeauthoryear{??}{SQL}{[n.d.]b}]%
        {SQLServer}
 \bibinfo{year}{[n.d.]}\natexlab{b}.
\newblock \bibinfo{title}{{Microsoft SQL Server 2019}}.
\newblock
\newblock
\urldef\tempurl%
\url{https://www.microsoft.com/en-us/sql-server/sql-server-2019}
\showURL{%
\tempurl}


\bibitem[\protect\citeauthoryear{??}{CUR}{[n.d.]b}]%
        {CURSE4}
 \bibinfo{year}{[n.d.]}\natexlab{b}.
\newblock \bibinfo{title}{{Performance Considerations of Cursors}}.
\newblock
\newblock
\urldef\tempurl%
\url{https://www.brentozar.com/sql-syntax-examples/cursor-example/}
\showURL{%
\tempurl}


\bibitem[\protect\citeauthoryear{??}{Rub}{[n.d.]a}]%
        {Rubbos}
 \bibinfo{year}{[n.d.]}\natexlab{a}.
\newblock \bibinfo{title}{{RUBBoS: Rice University Bulletin Board System
  Benchmark}}.
\newblock
\newblock
\urldef\tempurl%
\url{http://jmob.ow2.org/rubbos.html}
\showURL{%
\tempurl}


\bibitem[\protect\citeauthoryear{??}{Rub}{[n.d.]b}]%
        {Rubis}
 \bibinfo{year}{[n.d.]}\natexlab{b}.
\newblock \bibinfo{title}{{RUBiS: Rice University Bidding System Benchmark}}.
\newblock
\newblock
\urldef\tempurl%
\url{http://rubis.ow2.org/}
\showURL{%
\tempurl}


\bibitem[\protect\citeauthoryear{??}{UDF}{[n.d.]}]%
        {UDFInlining}
 \bibinfo{year}{[n.d.]}\natexlab{}.
\newblock \bibinfo{title}{{Scalar UDF Inlining}}.
\newblock
\newblock
\urldef\tempurl%
\url{https://docs.microsoft.com/en-us/sql/relational-databases/user-defined-functions/scalar-udf-inlining?view=sql-server-ver15}
\showURL{%
\tempurl}


\bibitem[\protect\citeauthoryear{??}{CUR}{[n.d.]c}]%
        {CURSE}
 \bibinfo{year}{[n.d.]}\natexlab{c}.
\newblock \bibinfo{title}{{The Curse of the Cursors: Why You Don’t Need
  Cursors in Code Development}}.
\newblock
\newblock
\urldef\tempurl%
\url{https://www.datavail.com/blog/curse-of-the-cursors-in-code-development/}
\showURL{%
\tempurl}


\bibitem[\protect\citeauthoryear{??}{CUR}{[n.d.]d}]%
        {CURSE2}
 \bibinfo{year}{[n.d.]}\natexlab{d}.
\newblock \bibinfo{title}{{The Truth About Cursors}}.
\newblock
\newblock
\urldef\tempurl%
\url{http://bradsruminations.blogspot.com/2010/05/truth-about-cursors-part-1.html}
\showURL{%
\tempurl}


\bibitem[\protect\citeauthoryear{??}{TSQ}{[n.d.]}]%
        {TSQL}
 \bibinfo{year}{[n.d.]}\natexlab{}.
\newblock \bibinfo{title}{{Transact SQL}}.
\newblock
\newblock
\urldef\tempurl%
\url{https://docs.microsoft.com/en-us/sql/t-sql/language-elements/language-elements-transact-sql}
\showURL{%
\tempurl}


\bibitem[\protect\citeauthoryear{Ahmad and Cheung}{Ahmad and Cheung}{2018}]%
        {Ahmad18}
\bibfield{author}{\bibinfo{person}{Maaz Bin~Safeer Ahmad} {and}
  \bibinfo{person}{Alvin Cheung}.} \bibinfo{year}{2018}\natexlab{}.
\newblock \showarticletitle{Automatically Leveraging MapReduce Frameworks for
  Data-Intensive Applications}. In \bibinfo{booktitle}{\emph{Proceedings of the
  2018 International Conference on Management of Data}}
  \emph{(\bibinfo{series}{SIGMOD '18})}. \bibinfo{publisher}{ACM},
  \bibinfo{address}{New York, NY, USA}, \bibinfo{pages}{1205--1220}.
\newblock
\showISBNx{978-1-4503-4703-7}
\urldef\tempurl%
\url{https://doi.org/10.1145/3183713.3196891}
\showDOI{\tempurl}


\bibitem[\protect\citeauthoryear{Aho, Lam, Sethi, and Ullman}{Aho
  et~al\mbox{.}}{2006}]%
        {Aho06}
\bibfield{author}{\bibinfo{person}{Alfred~V. Aho}, \bibinfo{person}{Monica~S.
  Lam}, \bibinfo{person}{Ravi Sethi}, {and} \bibinfo{person}{Jeffrey~D.
  Ullman}.} \bibinfo{year}{2006}\natexlab{}.
\newblock \bibinfo{booktitle}{\emph{Compilers: Principles, Techniques, and
  Tools}}.
\newblock \bibinfo{publisher}{Addison-Wesley}.
\newblock
\showISBNx{0321486811}


\bibitem[\protect\citeauthoryear{Cheung, Madden, Arden, , and Myers}{Cheung
  et~al\mbox{.}}{2012}]%
        {CHEUNG12}
\bibfield{author}{\bibinfo{person}{Alvin Cheung}, \bibinfo{person}{Samuel
  Madden}, \bibinfo{person}{Owen Arden}, \bibinfo{person}{}, {and}
  \bibinfo{person}{Andrew~C Myers}.} \bibinfo{year}{2012}\natexlab{}.
\newblock \showarticletitle{{A}utomatic {P}artitioning of {D}atabase
  {A}pplications}. In \bibinfo{booktitle}{\emph{Intl. Conf. on Very Large
  Databases}}.
\newblock


\bibitem[\protect\citeauthoryear{Cheung, Solar-Lezama, and Madden}{Cheung
  et~al\mbox{.}}{2013}]%
        {Cheung13}
\bibfield{author}{\bibinfo{person}{Alvin Cheung}, \bibinfo{person}{Armando
  Solar-Lezama}, {and} \bibinfo{person}{Samuel Madden}.}
  \bibinfo{year}{2013}\natexlab{}.
\newblock \showarticletitle{Optimizing database-backed applications with query
  synthesis} \emph{(\bibinfo{series}{PLDI})}. \bibinfo{pages}{3--14}.
\newblock
\showISBNx{978-1-4503-2014-6}
\urldef\tempurl%
\url{https://doi.org/10.1145/2462156.2462180}
\showDOI{\tempurl}


\bibitem[\protect\citeauthoryear{CLRU}{CLRU}{[n.d.]}]%
        {CLRU}
CLRU \bibinfo{year}{[n.d.]}\natexlab{}.
\newblock \bibinfo{title}{{CLR User-Defined Functions,
  https://msdn.micro\-soft.com/en-us/library/ms131077.aspx}}.
\newblock
\newblock
\urldef\tempurl%
\url{https://msdn.microsoft.com/en-us/library/ms131077.aspx}
\showURL{%
\tempurl}


\bibitem[\protect\citeauthoryear{Cohen}{Cohen}{2006}]%
        {COH06}
\bibfield{author}{\bibinfo{person}{Sara Cohen}.}
  \bibinfo{year}{2006}\natexlab{}.
\newblock \showarticletitle{User-defined Aggregate Functions: Bridging Theory
  and Practice}. In \bibinfo{booktitle}{\emph{ACM SIGMOD}}.
  \bibinfo{pages}{49--60}.
\newblock
\showISBNx{1-59593-434-0}


\bibitem[\protect\citeauthoryear{due to double~blind requirements}{due to
  double~blind requirements}{[n.d.]}]%
        {AGGIFYTR}
\bibfield{author}{\bibinfo{person}{Anonymized due to double~blind
  requirements}.} \bibinfo{year}{[n.d.]}\natexlab{}.
\newblock \bibinfo{title}{{Technical Report: Optimizing Cursor loops using
  Custom Aggregates}}.
\newblock
\newblock
\urldef\tempurl%
\url{https://drive.google.com/open?id=16ODf3NlsKUVKeW149L9lgfJ79iNPm80X}
\showURL{%
\tempurl}


\bibitem[\protect\citeauthoryear{{Duta}, {Hirn}, and {Grust}}{{Duta}
  et~al\mbox{.}}{2019}]%
        {Grust2019}
\bibfield{author}{\bibinfo{person}{Christian {Duta}}, \bibinfo{person}{Denis
  {Hirn}}, {and} \bibinfo{person}{Torsten {Grust}}.}
  \bibinfo{year}{2019}\natexlab{}.
\newblock \showarticletitle{{Compiling PL/SQL Away}}.
\newblock \bibinfo{journal}{\emph{arXiv e-prints}}, Article
  \bibinfo{articleno}{arXiv:1909.03291} (\bibinfo{date}{Sep}
  \bibinfo{year}{2019}), \bibinfo{numpages}{arXiv:1909.03291}~pages.
\newblock
\showeprint[arxiv]{cs.DB/1909.03291}


\bibitem[\protect\citeauthoryear{Emani, Ramachandra, Bhattacharya, and
  Sudarshan}{Emani et~al\mbox{.}}{2016}]%
        {Emani2016}
\bibfield{author}{\bibinfo{person}{K.~Venkatesh Emani},
  \bibinfo{person}{Karthik Ramachandra}, \bibinfo{person}{Subhro Bhattacharya},
  {and} \bibinfo{person}{S. Sudarshan}.} \bibinfo{year}{2016}\natexlab{}.
\newblock \showarticletitle{Extracting Equivalent SQL from Imperative Code in
  Database Applications} \emph{(\bibinfo{series}{ACM SIGMOD})}. 16.
\newblock
\showISBNx{978-1-4503-3531-7}
\urldef\tempurl%
\url{https://doi.org/10.1145/2882903.2882926}
\showDOI{\tempurl}


\bibitem[\protect\citeauthoryear{Emani and Sudarshan}{Emani and
  Sudarshan}{2018}]%
        {cobra18}
\bibfield{author}{\bibinfo{person}{K~Venkatesh Emani} {and} \bibinfo{person}{S
  Sudarshan}.} \bibinfo{year}{2018}\natexlab{}.
\newblock \showarticletitle{Cobra: A Framework for Cost-Based Rewriting of
  Database Applications}. In \bibinfo{booktitle}{\emph{2018 IEEE 34th
  International Conference on Data Engineering (ICDE)}}. IEEE,
  \bibinfo{pages}{689--700}.
\newblock


\bibitem[\protect\citeauthoryear{Ferrante, Ottenstein, and Warren}{Ferrante
  et~al\mbox{.}}{1987}]%
        {PDG}
\bibfield{author}{\bibinfo{person}{Jeanne Ferrante}, \bibinfo{person}{Karl~J.
  Ottenstein}, {and} \bibinfo{person}{Joe~D. Warren}.}
  \bibinfo{year}{1987}\natexlab{}.
\newblock \showarticletitle{The Program Dependence Graph and Its Use in
  Optimization}.
\newblock \bibinfo{journal}{\emph{ACM Trans. Program. Lang. Syst.}}
  \bibinfo{volume}{9}, \bibinfo{number}{3} (\bibinfo{date}{July}
  \bibinfo{year}{1987}), \bibinfo{pages}{319–349}.
\newblock
\showISSN{0164-0925}
\urldef\tempurl%
\url{https://doi.org/10.1145/24039.24041}
\showDOI{\tempurl}


\bibitem[\protect\citeauthoryear{Floratos, Zhang, Yuan, Lee, and
  Zhang}{Floratos et~al\mbox{.}}{2018}]%
        {sqloop18}
\bibfield{author}{\bibinfo{person}{Sofoklis Floratos}, \bibinfo{person}{Yanfeng
  Zhang}, \bibinfo{person}{Yuan Yuan}, \bibinfo{person}{Rubao Lee}, {and}
  \bibinfo{person}{Xiaodong Zhang}.} \bibinfo{year}{2018}\natexlab{}.
\newblock \showarticletitle{SQLoop: High Performance Iterative Processing in
  Data Management}. In \bibinfo{booktitle}{\emph{38th {IEEE} International
  Conference on Distributed Computing Systems, {ICDCS} 2018, Vienna, Austria,
  July 2-6, 2018}}. \bibinfo{pages}{1039--1051}.
\newblock
\urldef\tempurl%
\url{https://doi.org/10.1109/ICDCS.2018.00104}
\showDOI{\tempurl}


\bibitem[\protect\citeauthoryear{Galindo{-}Legaria and Joshi}{Galindo{-}Legaria
  and Joshi}{2001}]%
        {Gal01}
\bibfield{author}{\bibinfo{person}{C{\'{e}}sar~A. Galindo{-}Legaria} {and}
  \bibinfo{person}{Milind Joshi}.} \bibinfo{year}{2001}\natexlab{}.
\newblock \showarticletitle{Orthogonal Optimization of Subqueries and
  Aggregation}. In \bibinfo{booktitle}{\emph{SIGMOD}}.
  \bibinfo{pages}{571--581}.
\newblock
\urldef\tempurl%
\url{https://doi.org/10.1145/375663.375748}
\showDOI{\tempurl}


\bibitem[\protect\citeauthoryear{{Gupta}, {Purandare}, and
  {Ramachandra}}{{Gupta} et~al\mbox{.}}{2020}]%
        {Aggify}
\bibfield{author}{\bibinfo{person}{S. {Gupta}}, \bibinfo{person}{S.
  {Purandare}}, {and} \bibinfo{person}{K. {Ramachandra}}.}
  \bibinfo{year}{2020}\natexlab{}.
\newblock \showarticletitle{Aggify: Lifting the Curse of Cursor Loops Using
  Custom Aggregates}. In \bibinfo{booktitle}{\emph{SIGMOD (To Appear)}}.
\newblock


\bibitem[\protect\citeauthoryear{Guravannavar and Sudarshan}{Guravannavar and
  Sudarshan}{2008}]%
        {GUR08}
\bibfield{author}{\bibinfo{person}{Ravindra Guravannavar} {and}
  \bibinfo{person}{S Sudarshan}.} \bibinfo{year}{2008}\natexlab{}.
\newblock \showarticletitle{{R}ewriting {P}rocedures {f}or {B}atched
  {B}indings}. In \bibinfo{booktitle}{\emph{Intl. Conf. on Very Large
  Databases}}.
\newblock


\bibitem[\protect\citeauthoryear{HPLSQL}{HPLSQL}{[n.d.]}]%
        {HPLSQL}
HPLSQL \bibinfo{year}{[n.d.]}\natexlab{}.
\newblock \bibinfo{title}{{Procedural SQL on Hadoop, NoSQL and RDBMS}}.
\newblock
\newblock
\urldef\tempurl%
\url{http://www.hplsql.org/why}
\showURL{%
\tempurl}


\bibitem[\protect\citeauthoryear{JDBC}{JDBC}{[n.d.]}]%
        {JDBC}
JDBC \bibinfo{year}{[n.d.]}\natexlab{}.
\newblock \bibinfo{title}{{The Java Database Connectivity API}}.
\newblock
\newblock
\urldef\tempurl%
\url{https://docs.oracle.com/javase/8/docs/technotes/guides/jdbc/}
\showURL{%
\tempurl}


\bibitem[\protect\citeauthoryear{Kennedy and Allen}{Kennedy and Allen}{2002}]%
        {KENNEDYBOOK}
\bibfield{author}{\bibinfo{person}{Ken Kennedy} {and} \bibinfo{person}{John~R.
  Allen}.} \bibinfo{year}{2002}\natexlab{}.
\newblock \bibinfo{booktitle}{\emph{Optimizing Compilers for Modern
  Architectures: A Dependence-based Approach}}.
\newblock \bibinfo{publisher}{Morgan Kaufmann Publishers Inc.}
\newblock
\showISBNx{1-55860-286-0}


\bibitem[\protect\citeauthoryear{Khedker, Sanyal, and Karkare}{Khedker
  et~al\mbox{.}}{2009}]%
        {DFABOOK}
\bibfield{author}{\bibinfo{person}{Uday Khedker}, \bibinfo{person}{Amitabha
  Sanyal}, {and} \bibinfo{person}{Bageshri Karkare}.}
  \bibinfo{year}{2009}\natexlab{}.
\newblock \bibinfo{booktitle}{\emph{Data Flow Analysis: Theory and Practice}}.
\newblock \bibinfo{publisher}{CRC Press}.
\newblock


\bibitem[\protect\citeauthoryear{Lieuwen and DeWitt}{Lieuwen and
  DeWitt}{1992}]%
        {Lieuwen92}
\bibfield{author}{\bibinfo{person}{Daniel Lieuwen} {and} \bibinfo{person}{David
  DeWitt}.} \bibinfo{year}{1992}\natexlab{}.
\newblock \showarticletitle{A Transformation-Based Approach to Optimizing Loops
  in Database Programming Languages.}
\newblock \bibinfo{journal}{\emph{Sigmod Record}}  \bibinfo{volume}{21},
  \bibinfo{pages}{91--100}.
\newblock
\urldef\tempurl%
\url{https://doi.org/10.1145/141484.130301}
\showDOI{\tempurl}


\bibitem[\protect\citeauthoryear{Muchnick}{Muchnick}{1997}]%
        {MUCHNICK}
\bibfield{author}{\bibinfo{person}{Steven~S. Muchnick}.}
  \bibinfo{year}{1997}\natexlab{}.
\newblock \bibinfo{booktitle}{\emph{{Advanced Compiler Design and
  Implementation}}}.
\newblock \bibinfo{publisher}{Morgan Kaufmann Publishers Inc.},
  \bibinfo{address}{San Francisco, CA, USA}.
\newblock


\bibitem[\protect\citeauthoryear{Park, Seo, Rasel, Lee, Jeong, Lee, Lee, and
  Lee}{Park et~al\mbox{.}}{2019}]%
        {Park19}
\bibfield{author}{\bibinfo{person}{Kisung Park}, \bibinfo{person}{Hojin Seo},
  \bibinfo{person}{Mostofa~Kamal Rasel}, \bibinfo{person}{Young-Koo Lee},
  \bibinfo{person}{Chanho Jeong}, \bibinfo{person}{Sung~Yeol Lee},
  \bibinfo{person}{Chungmin Lee}, {and} \bibinfo{person}{Dong-Hun Lee}.}
  \bibinfo{year}{2019}\natexlab{}.
\newblock \showarticletitle{Iterative Query Processing Based on Unified
  Optimization Techniques}. In \bibinfo{booktitle}{\emph{Proceedings of the
  2019 International Conference on Management of Data}}
  \emph{(\bibinfo{series}{SIGMOD '19})}. \bibinfo{publisher}{ACM},
  \bibinfo{address}{New York, NY, USA}, \bibinfo{pages}{54--68}.
\newblock
\showISBNx{978-1-4503-5643-5}
\urldef\tempurl%
\url{https://doi.org/10.1145/3299869.3324960}
\showDOI{\tempurl}


\bibitem[\protect\citeauthoryear{Radoi, Fink, Rabbah, and Sridharan}{Radoi
  et~al\mbox{.}}{2014}]%
        {Radoi14}
\bibfield{author}{\bibinfo{person}{Cosmin Radoi}, \bibinfo{person}{Stephen~J.
  Fink}, \bibinfo{person}{Rodric Rabbah}, {and} \bibinfo{person}{Manu
  Sridharan}.} \bibinfo{year}{2014}\natexlab{}.
\newblock \showarticletitle{Translating Imperative Code to MapReduce}. In
  \bibinfo{booktitle}{\emph{Proceedings of the 2014 ACM International
  Conference on Object Oriented Programming Systems Languages \& Applications}}
  \emph{(\bibinfo{series}{OOPSLA '14})}. \bibinfo{publisher}{ACM},
  \bibinfo{address}{New York, NY, USA}, \bibinfo{pages}{909--927}.
\newblock
\showISBNx{978-1-4503-2585-1}
\urldef\tempurl%
\url{https://doi.org/10.1145/2660193.2660228}
\showDOI{\tempurl}


\bibitem[\protect\citeauthoryear{Ramachandra, Park, Emani, Halverson,
  Galindo-Legaria, and Cunningham}{Ramachandra et~al\mbox{.}}{2017}]%
        {FroidVldb}
\bibfield{author}{\bibinfo{person}{Karthik Ramachandra},
  \bibinfo{person}{Kwanghyun Park}, \bibinfo{person}{K.~Venkatesh Emani},
  \bibinfo{person}{Alan Halverson}, \bibinfo{person}{C{\'e}sar
  Galindo-Legaria}, {and} \bibinfo{person}{Conor Cunningham}.}
  \bibinfo{year}{2017}\natexlab{}.
\newblock \showarticletitle{Froid: Optimization of Imperative Programs in a
  Relational Database}.
\newblock \bibinfo{journal}{\emph{PVLDB}} \bibinfo{volume}{11},
  \bibinfo{number}{4} (\bibinfo{year}{2017}), \bibinfo{pages}{432--444}.
\newblock
\showISSN{2150-8097}


\bibitem[\protect\citeauthoryear{Simhadri, Ramachandra, Chaitanya,
  Guravannavar, and Sudarshan}{Simhadri et~al\mbox{.}}{2014}]%
        {uudf}
\bibfield{author}{\bibinfo{person}{V. Simhadri}, \bibinfo{person}{K.
  Ramachandra}, \bibinfo{person}{A. Chaitanya}, \bibinfo{person}{R.
  Guravannavar}, {and} \bibinfo{person}{S. Sudarshan}.}
  \bibinfo{year}{2014}\natexlab{}.
\newblock \showarticletitle{Decorrelation of user defined function invocations
  in queries}. In \bibinfo{booktitle}{\emph{ICDE 2014}}.
  \bibinfo{pages}{532--543}.
\newblock


\bibitem[\protect\citeauthoryear{Yan and bike Larson}{Yan and bike
  Larson}{1995}]%
        {Yan95eageraggregation}
\bibfield{author}{\bibinfo{person}{Weipeng~P. Yan} {and} \bibinfo{person}{Per
  bike Larson}.} \bibinfo{year}{1995}\natexlab{}.
\newblock \showarticletitle{Eager aggregation and lazy aggregation}. In
  \bibinfo{booktitle}{\emph{In VLDB}}. \bibinfo{pages}{345--357}.
\newblock


\end{thebibliography}

\end{document}